\newcommand{\BEQ}{\begin{equation}}
\newcommand{\EEQ}{\end{equation}}
\def\bea{\begin{eqnarray}}
\def\eea{\end{eqnarray}}
\def\nn{\nonumber}
\theoremstyle{plain}
\newtheorem{Th}{Theorem}[section]
\newtheorem{Lem}[Th]{Lemma}
\newtheorem{Hyp}[Th]{Conjecture}
\newtheorem{Rem}[Th]{Remark}
\theoremstyle{definition}
\newtheorem*{ack}{Acknowledgements}
\def\bea{\begin{eqnarray}}
\def\eea{\end{eqnarray}}
\def\bes{\begin{equation*} \begin{split}}
\def\ees{\end{split} \end{equation*}}
\def\CC{{\mathbb{ C}}}
\newcommand{\sign}{\text{sign}}
\begin{document}

\title{On a family of Poisson brackets on $\mathfrak{gl}_n$ compatible with the Sklyanin bracket
}

\author[1]{Vladimir V. Sokolov\footnote{vsokolov1952@gmail.com}}
\author[2,3]{Dmitry V. Talalaev\footnote{dtalalaev@yandex.ru}}
\affil[1]{\small Higher School of Modern Mathematics MIPT, 141701,  Moscow,   Russia , 
}

\affil[2]{\small Lomonosov Moscow State University, 
119991, Moscow, Russia.
}
\affil[3]{\small Center of Integrable Systems, Demidov Yaroslavl State University, Yaroslavl, Russia, 150003, Sovetskaya Str. 14
}

\date{}

\maketitle

\abstract{In this paper, we study a family of compatible quadratic Poisson brackets on  $\mathfrak{gl}_n$, generalizing the Sklyanin one. For any of the brackets in the family, the argument shift determines the compatible linear bracket. The main interest for us is the use of the bi-Hamiltonian formalism for some pencils from this family, as a method for constructing involutive subalgebras for a linear bracket starting by the center of the quadratic bracket. We give some interesting examples of families of this type.

We construct the centers of the corresponding quadratic brackets using the antidiagonal principal minors of the Lax matrix. Special attention should be paid to the condition of the log-canonicity of the brackets of these minors with all the generators of the Poisson algebra of the family under consideration. A similar property arises in the context of Poisson structures consistent with cluster transformations.
}

\vskip 5mm
~\\
\tableofcontents
\vskip 2cm
\section{Introduction}
Let us consider the vector space $V=\mathbb{C}^n$. The space of $n\times n$ matrices we will associate with the space $End(V)$. A basis in this space is given by matrix unities $E_{ij}$. Here and below
$e_{ij}$ - the full set of generators of the Lie algebra $\mathfrak{gl}_n,$ 
the Lax oeprator has the form 
\bea
\label{Lax}
L=\sum_{ij} E_{ij}\otimes e_{ij},
\eea
its tensor copies are expressed as:
\bea
L_1=L\otimes 1;\qquad L_2=1\otimes L.\nn
\eea
In this paper, we study Poisson structures of the form:
\bea
\label{PoissQuad2}
\{L\otimes L\}=A L_1 L_2+L_1 B L_2+L_2 C L_1 +L_1 L_2 D,
\eea
where $A,B,C,D\in End(V)^{\otimes^2}.$ A special case of such a bracket is the Sklyanin bracket
\bea
\label{SklBr}
\{L\otimes L\}=[R, L\otimes L],
\eea
which appeared in the framework of the quantum inverse scattering method (see, for example, in \cite{Skl}). It is essentially related to the theory of Lie-Poisson groups. The general class of Poisson brackets of the form \eqref{PoissQuad2} appeared in the works of Semenov-Tian-Shansky \cite{StSS}, \cite{StS} and Suris \cite{Suris1}, \cite{Suris2}. Such brackets were obtained by reductions from the standard $r$-matrix brackets on the dubble $\mathfrak{gl}_n\oplus\mathfrak{gl}_n$. It is not difficult to verify that the right hand side of  bracket \eqref{PoissQuad2} can be reduced by the skew-symmetrization procedure to the form (\ref{norm})
\bea
\label{norm}
C=-B^{*},\qquad  A^{*}=-A, \qquad D^{*}=-D.
\eea
The book \cite[section 2.5]{Suris2} provides sufficient conditions for fulfilling the Jacobi identity for such brackets.

The origin of such Poisson structures is largely due to the theory of quantum groups \cite{FRT}, that is, algebras of the RTT type and their close relatives RE-algebras. Recall that the Drinfeld $R$- matrix is the expression
\bea
R=\sum_{i<j}(q-q^{-1})E_{ii}\otimes E_{jj}+\sum_{i,j}q^{\delta(i,j)}E_{ij}\otimes E_{ji} \in Mat_{n}\otimes Mat_{n}.\nn
\eea
It determines the solution of the quantum Yang-Baxter equation
\bea
R_{12} R_{13} R_{23}=R_{23} R_{13} R_{12}.\nn
\eea
An RTT algebra is defined as a quotient algebra of a free algebra  $T^{\cdot}(t_{ij})$  where $i,j\in(1,\ldots,n),$ by the quadratic relations
\bea
R T_1 T_2=T_1 T_2 R.\nn
\eea
Here $T$ is a generating matrix for RTT-algebra elements
 \bea
 T=\sum E_{ij}\otimes t_{ij}; \qquad  T_1=\sum E_{ij}\otimes 1 \otimes t_{ij};\qquad  L_2=\sum 1\otimes E_{ij} \otimes t_{ij}.\nn
 \eea
 RE-algebras were first introduced by Cherednik\cite{Cher} when studying scattering processes with a boundary.
Such an algebra $\mathcal{A}$ is defined using generators $a_{ij}$ and quadratic relations of the form:
\bea
\label{RE}
RL_1RL_1=L_1RL_1R \in Mat_{n}^{\otimes 2}\otimes \mathcal{A},\nn
\eea
where $L$ is a generating matrix of $a_{ij}$:
 \bea
 L=\sum E_{ij}\otimes a_{ij},\qquad
 L_1=\sum E_{ij}\otimes 1 \otimes a_{ij}.\nn
 \eea
Drinfeld $R$-matrix can be expanded in the parameter $\hbar=2 \log(q)$ at zero:
\bea
R=P+\hbar \rho +O(\hbar^2),\nn
\eea
where
\bea
\label{rho}
\rho=\frac 1 2\sum_i E_{ii}\otimes E_{ii}+\sum_{i<j} E_{ii}\otimes E_{jj},\qquad P=\sum_{ij}E_{ij}\otimes E_{ji}.
\eea
Let us denote
\bea
\label{r}
r=\rho P=\frac 1 2\sum_{i} E_{ii}\otimes E_{ii}+\sum_{i<j}E_{ij}\otimes E_{ji},
\eea
and introduce an involution operation on the elements $a\in Mat_n\otimes Mat_n$ by
\bea
\qquad a^*=PaP,\nn
\eea
which acts as a permutation on matrix generators
\bea
\left(E_{ij}\otimes E_{kl}\right)^{*}=E_{kl}\otimes E_{ij}.\nn
\eea
Then
\bea
r^*=P r P=P\rho;\qquad  (\rho+\rho^*)=E,\qquad(r+r^*)=P.\nn
\eea
The RTT-algebra in the limit $\hbar\rightarrow 0$  produces the Sklyanin bracket
\bea
\label{Skl}
\{L\otimes L\}=[r,L\otimes L].
\eea
A similar limit for the RE-algebra gives a generalized quadratic bracket
\bea 
\label{REclass}
\{L\otimes L\}=r L_1 L_2+ L_2 r^* L_1-L_1 r L_2 - L_1 L_2 r^*.
\eea

Together with the quadratic brackets \eqref{PoissQuad2}, we consider the linear Poisson brackets, which are obtained from the quadratic ones by  the argument shift. This procedure consists in replacing the coordinates $L \mapsto L+\lambda E$ in the phase space. In the new coordinates, the bracket \eqref{PoissQuad2} has the form
\bea
\label{pencil}
\{L\otimes L\}_{\lambda}=\{L\otimes L\}+\lambda \{L\otimes L\}_{lin}+\lambda^2 \{L\otimes L\}_{const}.
\eea
It is easy to see that
\bea
\label{PoissLin2}
\{L\otimes L\}_{lin}=[A+C,L_1]+[A+B,L_2].
\eea
The coefficient of $\lambda^2$ vanishes if and only if
$$
A+B+C+D=0.
$$
Under this condition, the bracket \eqref{PoissLin2} satisfies the Jacobi identity and is a linear Poisson bracket compatible with \eqref{PoissQuad2} \cite{magri1} .

For example, the linear bracket generated by the argument shift of the bracket \eqref{REclass} coincides with the standard $\mathfrak{gl}_n$ - bracket.

\section{The general family of quadratic brackets}

Consider the following specialization of quadratic brackets (\ref{PoissQuad2}):
\bea
\label{quad-bracket}
\{L\otimes L\}=(A+S)L_1 L_2 +L_1(B-S)L_2+L_2(-B-S)L_1+L_1 L_2(-A+S),
\eea
where coefficients have the form
\bea\label{coe}
A&=&- w \sum_{i>j} (E_{ij}\otimes E_{ji} -E_{ji}\otimes E_{ij})+\sum_{i>j} a_{ij}(E_{ii}\otimes E_{jj}-E_{jj}\otimes E_{ii});
\nn\\ 
B&=&\sum_{i> j} b_{ij}(E_{ii}\otimes E_{jj}+E_{jj}\otimes E_{ii})+\sum_i b_{ii} E_{ii}\otimes E_{ii};\\
S&=&\sum_{i>j} s_{ij}(E_{ii}\otimes E_{jj}-E_{jj}\otimes E_{ii}) \nn.
\eea
Here $w, a_{ij},  b_{ij} $ и $s_{ij}$ are some constants.
The components of such a bracket satisfy the relations
\bea
A^*=-A,\quad ~B^*=B, \quad S^*=-S.\nn
\eea
If $w=1/2$ and the remaining constants are equal to zero, formulas \eqref{quad-bracket}, \eqref{coe} give the Sklyanin bracket written in the normalized form \eqref{norm}.

\begin{Th}
The brackets \eqref{quad-bracket}, \eqref{coe} for any parameter values are Poisson brackets.
\end{Th}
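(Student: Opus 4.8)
The plan is to verify the two defining properties of a Poisson bracket: skew-symmetry and the Jacobi identity (the Leibniz rule is built into the quadratic ansatz, since the bracket is prescribed on the generators $e_{ij}$ and extended as a derivation). Skew-symmetry is immediate. Writing \eqref{quad-bracket} in the general form \eqref{PoissQuad2} with coefficients $\mathcal A=A+S$, $\mathcal B=B-S$, $\mathcal C=-B-S$, $\mathcal D=-A+S$, I would simply check that the normalization \eqref{norm} holds: using $A^*=-A$, $B^*=B$, $S^*=-S$ one gets $\mathcal A^*=-\mathcal A$, $\mathcal D^*=-\mathcal D$ and $\mathcal C=-\mathcal B^*$, which is exactly the condition guaranteeing $P\{L\otimes L\}P=-\{L\otimes L\}$. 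Note also $\mathcal A+\mathcal B+\mathcal C+\mathcal D=0$, consistent with the compatible linear bracket of Section 1.

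For the Jacobi identity I would exploit its bilinearity. The Schouten square $J(\pi)=\tfrac12[\pi,\pi]$ of the Poisson bivector $\pi$ is a symmetric quadratic function of $\pi$, while the structure tensor of \eqref{quad-bracket}--\eqref{coe} is affine-linear in the parameters $w,a_{ij},b_{ij},s_{ij}$. Hence it is natural to split the coefficient data into the single genuinely non-commutative piece $A_{\mathrm{off}}=-w\sum_{i>j}(E_{ij}\otimes E_{ji}-E_{ji}\otimes E_{ij})$ and the remaining ``Cartan'' data $A_{\mathrm{diag}}$, $B$, $S$, all of which lie in the commutative span of the $E_{ii}\otimes E_{jj}$. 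Writing $\pi=\pi_w+\pi_{\mathrm{Cart}}$ accordingly, the Jacobiator decomposes as $J(\pi)=J(\pi_w)+[\pi_w,\pi_{\mathrm{Cart}}]+J(\pi_{\mathrm{Cart}})$, and it suffices to treat these three contributions separately.

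The two ``pure'' terms are essentially free. With only $w$ switched on, the bracket reduces to $[A_{\mathrm{off}},L_1L_2]$, which is a scalar multiple of the normalized Sklyanin bracket \eqref{Skl} (at $w=\tfrac12$ it is exactly that bracket), so $J(\pi_w)=0$. With only the Cartan data switched on, a direct computation shows that each of $\mathcal A L_1L_2$, $L_1\mathcal BL_2$, $L_2\mathcal CL_1$, $L_1L_2\mathcal D$ contributes a constant multiple of $e_{kl}e_{mn}$ to $\{e_{kl},e_{mn}\}$; thus the Cartan bracket is log-canonical, $\{e_{kl},e_{mn}\}=C_{kl,mn}\,e_{kl}e_{mn}$ with $C_{kl,mn}=-C_{mn,kl}$, and every log-canonical bracket satisfies the Jacobi identity identically. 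This already absorbs all cross-relations internal to the parameters $a_{ij},b_{ij},s_{ij}$.

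The main obstacle is the mixed term $[\pi_w,\pi_{\mathrm{Cart}}]$, i.e.\ the compatibility of the Sklyanin-type bracket with the log-canonical one. I would compute it directly in $End(V)^{\otimes 3}$, where it assembles from commutators of $A_{\mathrm{off}}$ placed in legs $(12),(13),(23)$ against the Cartan matrices in the complementary legs, together with the terms in which $A_{\mathrm{off}}$ and a Cartan matrix multiply different copies of $L$. The simplification I expect to use is that $A_{\mathrm{off}}$ is supported on the root-type elements $E_{ij}\otimes E_{ji}$, so its commutators with the weight-type elements $E_{kk}\otimes E_{ll}$ are controlled by a single shift of weights; tracking the surviving basis elements $E_{pq}\otimes E_{rs}\otimes E_{tu}$ reduces the vanishing of $[\pi_w,\pi_{\mathrm{Cart}}]$ to a finite index bookkeeping. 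Since all the parameters are free, each type of surviving term must cancel on its own, which both pins down and guides the computation; this is the one place where genuine, if routine, calculation is unavoidable. Alternatively, one may verify the same three groups of identities through the sufficient conditions for brackets of the form \eqref{PoissQuad2} recorded in \cite[section 2.5]{Suris2}.
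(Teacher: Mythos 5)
Your structural reductions are correct, and they follow a genuinely different route from the paper: the paper argues that the bracket is \emph{local} (the expression for $\{e_{ij},e_{kl}\}$ involves only indices from $\{i,j,k,l\}$), so the Jacobi identity for all $n$ reduces to the single finite case $n=6$, which is then checked by direct calculation. You instead split the bivector as $\pi=\pi_w+\pi_{\mathrm{Cart}}$ and use bilinearity of the Schouten bracket. The pieces you do prove are right: the skew-symmetry check via \eqref{norm} is correct; $\pi_w$ is $2w$ times the Sklyanin bracket \eqref{Skl} (since $[P,L_1L_2]=0$, only the skew part of $r$ matters), so $J(\pi_w)=0$; and the Cartan part is log-canonical with skew-symmetric constants, hence satisfies the Jacobi identity identically. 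This correctly reduces the theorem to the vanishing of the mixed term $[\pi_w,\pi_{\mathrm{Cart}}]$.

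The gap is that you never prove $[\pi_w,\pi_{\mathrm{Cart}}]=0$; you describe a computation you ``expect'' to carry out and assert it ``reduces to finite index bookkeeping.'' This is exactly where the content of the theorem sits, and it is not a formal consequence of anything you have established: compatibility with the Sklyanin bracket is a nontrivial linear constraint on a log-canonical bracket, not a property of log-canonical brackets per se. Concretely, for $n=2$ write $L=\begin{pmatrix} a & b \\ c & d\end{pmatrix}$ and take the log-canonical bracket $\{a,b\}'=ab$ with all other brackets zero; it is skew and Poisson, yet the mixed Jacobiator with the Sklyanin bracket on the triple $(a,b,d)$ equals
\begin{equation*}
\{d,\{a,b\}'\}_{\mathrm{Skl}}+\{d,\{a,b\}\}'_{\mathrm{Skl\text{-}mix}}+\cdots=-b^2c\neq 0,
\end{equation*}
so the vanishing for the family \eqref{quad-bracket}, \eqref{coe} depends on the precise placement and symmetry of the tensors $A$, $B$, $S$ and genuinely has to be verified. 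Moreover, as stated, your remaining task is an identity for every $n$ and all parameter values, and you give no argument that it is a finite check. The natural patch is precisely the paper's locality observation: both $\pi_w$ and $\pi_{\mathrm{Cart}}$ are local, so the mixed Jacobiator evaluated on any triple of generators involves at most six distinct indices, and its vanishing for all $n$ follows from the case $n\le 6$ --- a finite verification that must then actually be performed (by hand or by computer), exactly as in the paper.
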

\begin{proof}
The brackets of the family under consideration are local, that is, $\{e_{ij}, e_{kl}\}$ contains only products of the elements $\{e_{pq}, e_{rs}\}$, where all indices belong to the set $\{i,j,k,l\}$. Therefore, when checking the Jacobi identity, it suffices to consider $n=6$. In this case, the fact that the Jacobi identity is fulfilled can be verified by direct calculation.
\end{proof}

The family of corresponding linear 
$r$-matrix brackets (\ref{PoissLin2}) in the case of \eqref{quad-bracket}, \eqref{coe} specializes as follows:
\bea
\label{PoissLin1}
\{L\otimes L\}_{lin}=[A-B,L_1]+[A+B,L_2].
\eea
Notice that the tensor $S$ is missing in the coefficients of the linear Poisson bracket.  
 
Let us introduce the following notation:
\bea
c_{i,j}=a_{ij}+b_{ij},\qquad
h_{k,i}=(c_{k,i}-c_{k,i+1}).\nn
\eea
It turns out that the linear bracket (\ref{PoissLin1}) has a natural interpretation in terms of extensions of Lie algebras:
\begin{Th} \label{ThLinPoiss} 
The  bracket \eqref{PoissLin1} has the following properties:
\begin{itemize}
\item being restricted to   $\mathfrak{n}+$ и $\mathfrak{n}_-$ it coincides with the standard Poisson bracket 
$\{X, Y\}_{\mathfrak{gl}}$, multiplied by 
$2 w$ and $-2 w$,  respectively;

 \item $\{\mathfrak{n}+,\mathfrak{n}-\}_{lin} = 0$;

\item the action of the Cartan subalgebra $\mathfrak{h}$ on $\mathfrak{n_+}\oplus \mathfrak{n_-}$ is given by the formulas
\bea
\{u_{k,k}, u_{i,i+1}\}_{lin}&=&( w  (\delta_{k,i}-\delta_{k,i+1})+h_{k,i} ) u_{i,i+1},\nn\\
\{u_{k,k}, u_{i+1,i}\}_{lin}&=&( w (\delta_{k,i}-\delta_{k,i+1})-h_{k,i})  u_{i+1,i}, 
\eea
where \quad $k=1,\dots, n, \,\,  i=1,\dots, n-1.$ 
\end{itemize}
\end{Th}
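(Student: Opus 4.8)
The plan is to compute the bracket \eqref{PoissLin1} explicitly on the matrix generators $e_{ij}$ and then read off each of the three claimed properties by comparing with the standard $\mathfrak{gl}_n$ structure constants $\{e_{ij},e_{kl}\}_{\mathfrak{gl}}=\delta_{jk}e_{il}-\delta_{li}e_{kj}$. Writing $\{L\otimes L\}_{lin}=[A-B,L_1]+[A+B,L_2]$ means that the coefficient of $E_{ij}\otimes E_{kl}$ on the left is $\{e_{ij},e_{kl}\}_{lin}$, so the whole proof reduces to evaluating the two commutators $[A-B,L_1]$ and $[A+B,L_2]$ against the expansions \eqref{coe} of $A,B,S$. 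First I would substitute $L_1=\sum E_{ij}\otimes 1\otimes e_{ij}$ and $L_2=\sum 1\otimes E_{ij}\otimes e_{ij}$, and note that since $A$ and $B$ in \eqref{coe} are built only from diagonal-tensor terms $E_{ii}\otimes E_{jj}$ and from the off-diagonal pair $E_{ij}\otimes E_{ji}$ (weighted by $w$), the commutators will be sparse and the structure constants will fall out term by term.

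For the first two bullets I would restrict attention to generators lying in $\mathfrak{n}_+$ (the $e_{ij}$ with $i<j$) and $\mathfrak{n}_-$ (the $e_{ij}$ with $i>j$). The key observation is that the only part of $A$ and $B$ that can produce an off-diagonal generator in the output is the term $-w\sum_{i>j}(E_{ij}\otimes E_{ji}-E_{ji}\otimes E_{ij})$ inside $A$; the diagonal-tensor pieces of $A$ and of $B$ commute past off-diagonal $L_1$ or $L_2$ only by contributing scalars, which is precisely the mechanism of the Cartan action in the third bullet. Carrying out the commutator $[A-B,L_1]$ on a root vector should reproduce $2w\{X,Y\}_{\mathfrak{gl}}$ when both arguments lie in $\mathfrak{n}_\pm$, with the sign determined by whether we are in the upper or lower triangular part, and the cross bracket $\{\mathfrak{n}_+,\mathfrak{n}_-\}_{lin}$ should vanish because the relevant $w$-terms pair an upper index with a lower index and the resulting coefficient cancels between the $L_1$ and $L_2$ contributions. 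I would verify these cancellations by tracking exactly which $\delta$-symbols survive.

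For the third bullet, the arguments are $u_{k,k}$ (a Cartan generator, i.e.\ $e_{kk}$) together with a simple root vector $u_{i,i+1}=e_{i,i+1}$ or $u_{i+1,i}=e_{i+1,i}$. Here only the diagonal-tensor parts of $A\mp B$ act, so the computation is genuinely a scalar one: commuting $E_{kk}$-type terms against a single off-diagonal $E_{i,i+1}$ produces the coefficient $w(\delta_{k,i}-\delta_{k,i+1})$ from the $w$-piece of $A$ together with $\pm h_{k,i}$ from the $a_{ij}+b_{ij}$ combinations, after using $c_{k,i}=a_{ki}+b_{ki}$ and $h_{k,i}=c_{k,i}-c_{k,i+1}$. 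The main thing to get right is the bookkeeping: matching the $a_{ij}$ and $b_{ij}$ labels (which are indexed by $i>j$ in \eqref{coe}) against the generic indices $k,i,i+1$ appearing in the statement, and confirming that the $A-B$ versus $A+B$ weighting in \eqref{PoissLin1} yields the opposite signs $+h_{k,i}$ for $u_{i,i+1}$ and $-h_{k,i}$ for $u_{i+1,i}$.

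The main obstacle I anticipate is purely the index bookkeeping in the last step rather than any conceptual difficulty: the coefficients $a_{ij},b_{ij}$ are only defined for $i>j$, so evaluating the diagonal-tensor commutators against an arbitrary off-diagonal generator forces one to decide how the symmetric combination $b_{ij}(E_{ii}\otimes E_{jj}+E_{jj}\otimes E_{ii})$ and the antisymmetric one $a_{ij}(E_{ii}\otimes E_{jj}-E_{jj}\otimes E_{ii})$ distribute over the two commutators, and to check that they repackage exactly into $c_{k,i}$ and the difference $h_{k,i}$. I would organize this by fixing the output root and collecting all contributions to its coefficient, which should make the cancellations and the final formulas transparent.
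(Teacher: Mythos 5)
Your proposal is correct and is essentially the paper's (implicit) argument: the paper states Theorem \ref{ThLinPoiss} without giving a proof, and the intended verification is exactly the direct computation you describe --- reading $\{e_{ij},e_{kl}\}_{lin}$ off as the coefficient of $E_{ij}\otimes E_{kl}$ in $[A-B,L_1]+[A+B,L_2]$, noting that only the $w$-part of $A$ has off-diagonal tensor legs (hence it alone governs the brackets of root vectors) while the diagonal parts of $A\pm B$ produce the Cartan action, and extending $a_{ij}$, $b_{ij}$ antisymmetrically/symmetrically beyond the range $i>j$ so that the coefficients repackage into $c_{k,i}$ and $h_{k,i}$. One minor precision to keep in mind when carrying it out: on $\mathfrak{n}_+$ (resp.\ $\mathfrak{n}_-$) each of the two commutators $[A-B,L_1]$ and $[A+B,L_2]$ separately contributes $w\{X,Y\}_{\mathfrak{gl}}$ (resp.\ $-w\{X,Y\}_{\mathfrak{gl}}$), so the stated factor $\pm 2w$ arises from their sum rather than from $[A-B,L_1]$ alone, and the vanishing of $\{\mathfrak{n}_+,\mathfrak{n}_-\}_{lin}$ is precisely the cancellation of these two equal-and-opposite contributions, as you anticipated.
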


In further considerations of our work, the antidiagonal minors of the $L$ matrix play a key role. Such expressions arise in many integrable Hamiltonian models  related to the algebra $\mathfrak{gl}_n$, including varieties of the full Toda system \cite{DLNT}. The same functions play an important role in describing the cluster structure on flag varieties \cite{GSV}. In the case of quantum integrable systems, quasi-determinants for RE-algebras similar to these minors were considered in \cite{T}, a similar approach for $U_q(\mathfrak{sl}_n)$ was used in \cite{Skr}. 
In the case of the Sklyanin bracket, antidiagonal minors appeared in \cite{MarFok}.

Denote by $\Delta^{+}_k$ (respectively $\Delta^{-}_k$) the upper-right (respectively lower-left) antidiagonal minor of size $k$ of the matrix $L$. The following property is characteristic for quadratic brackets
\eqref{quad-bracket}, \eqref{coe}.

\begin{Th} For any bracket \eqref{quad-bracket}, \eqref{coe} the following relations hold:
\bea
\{e_{ij}, \Delta^{+}_k\}&=& \alpha_{ij}(k, n)\, e_{ij}\, \Delta^{+}_k,\nn\\
\{e_{ij}, \Delta^{-}_k\}&=& \beta_{ij}(k, n)\, e_{ij}\, \Delta^{-}_k,
\label{log-can}
\eea
where $\alpha_{ij}(k, n)$ and $\beta_{ij}(k, n)$ are some constants that depend linearly on the coefficients of the bracket.
\end{Th}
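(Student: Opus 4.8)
The plan is to make the brackets of the individual generators explicit, then differentiate the determinant $\Delta^{+}_k$ by the Leibniz rule and show that the two structurally different pieces of the bracket each collapse onto a multiple of $e_{ij}\Delta^{+}_k$. First I would extract from \eqref{quad-bracket}, \eqref{coe} the coefficient of $E_{ij}\otimes E_{kl}$, obtaining for the (commuting) generators a formula of the shape
\[
\{e_{ij},e_{pq}\}=c_{ij,pq}\,e_{ij}e_{pq}+w\,(\sign(q-j)-\sign(i-p))\,e_{iq}e_{pj}.
\]
The off-diagonal ("permutation") term comes solely from the $w$-part of $A$, while the diagonal parts of $A,B,S$ produce the log-canonical coefficient $c_{ij,pq}$, which depends linearly on $a_{\bullet},b_{\bullet},s_{\bullet}$ and, crucially, splits additively as $c_{ij,pq}=h(p)+g(q)$, with $h$ depending only on the row $p$ and $g$ only on the column $q$. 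Establishing this separable form is routine algebra but it is the input that drives everything else.

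Write $R=\{1,\dots,k\}$, $C=\{n-k+1,\dots,n\}$, set $N=(e_{pq})_{p\in R,\,q\in C}$ and let $M_{pq}$ be the cofactor of $e_{pq}$ in $N$, so that $\Delta^{+}_k=\det N$ and $\{e_{ij},\Delta^{+}_k\}=\sum_{p\in R,\,q\in C}\{e_{ij},e_{pq}\}\,M_{pq}$. For the diagonal contribution I would use the two cofactor expansions $\sum_{q\in C}e_{pq}M_{pq}=\Delta^{+}_k$ and $\sum_{p\in R}e_{pq}M_{pq}=\Delta^{+}_k$: because $c_{ij,pq}=h(p)+g(q)$ is separable, the $h(p)$ piece is resummed by the row expansion and the $g(q)$ piece by the column expansion, yielding $(\sum_{p\in R}h(p)+\sum_{q\in C}g(q))\,e_{ij}\Delta^{+}_k$. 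Thus the diagonal part is already log-canonical with a coefficient linear in the bracket constants, and this holds no matter where $i,j$ sit.

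The real work is the permutation part $\Xi=w\sum_{p,q}(\sign(q-j)-\sign(i-p))\,e_{iq}e_{pj}M_{pq}$, which I would analyze by the four regions fixed by whether $i\in R$ and whether $j\in C$. If $i\notin R$ then $\sign(i-p)\equiv 1$ on $R$, and if $j\notin C$ then $\sign(q-j)\equiv 1$ on $C$; in the bottom-left region both are constant and $\Xi=0$, while in the top-left and bottom-right regions one inner sum collapses through $\sum_q e_{iq}M_{pq}=\delta_{ip}\Delta^{+}_k$ or $\sum_p e_{pj}M_{pq}=\delta_{jq}\Delta^{+}_k$ (two equal rows, respectively columns), leaving $\Xi=\pm\,w\,e_{ij}\Delta^{+}_k$. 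The delicate case, and where I expect the main obstacle, is the top-right one, $i\in R$ and $j\in C$, in which $e_{ij}$ is itself an entry of $N$ and neither sign factor is constant. The clean way past it is the observation that, restricted to the entries of $N$, the permutation part of the bracket is \emph{exactly} the $\mathfrak{gl}_k$ Sklyanin bracket \eqref{Skl} on the $k\times k$ matrix $N$ (the index sets $R$ and $C$ being intervals, the orderings $\sign(q'-q)$ and $\sign(p-p')$ match those of a $k\times k$ grid); since the determinant is a Casimir of the Sklyanin bracket, the permutation contribution to $\{e_{ij},\det N\}$ vanishes. Hence in all four regions $\Xi$ equals one of $+w,\,0,\,0,\,-w$ times $e_{ij}\Delta^{+}_k$.

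Adding the two contributions gives $\{e_{ij},\Delta^{+}_k\}=\alpha_{ij}(k,n)\,e_{ij}\Delta^{+}_k$ with $\alpha_{ij}(k,n)=\sum_{p\in R}h(p)+\sum_{q\in C}g(q)+w\,\varepsilon_{ij}$, where $\varepsilon_{ij}\in\{+1,0,-1\}$, so $\alpha_{ij}$ is manifestly linear in $w,a,b,s$. The statement for the lower-left minors $\Delta^{-}_k$ follows from the identical computation with $R=\{n-k+1,\dots,n\}$ and $C=\{1,\dots,k\}$; the region bookkeeping is the mirror image and produces $\beta_{ij}(k,n)$. The step I would verify most carefully is the top-right case, i.e.\ the reduction of the permutation part to the Sklyanin Casimir property, as the remaining regions and the diagonal part are direct consequences of the elementary cofactor identities.
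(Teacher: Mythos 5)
Your proof is correct, and although it starts from the same top-level decomposition as the paper --- a bracket of the family \eqref{quad-bracket}, \eqref{coe} splits into a multiple of the Sklyanin bracket (the $w$-part of $A$) plus a bracket that is log-canonical on the generators (the diagonal tensors in $A$, $B$, $S$) --- your treatment of both pieces is genuinely different and more self-contained. For the log-canonical piece the paper only asserts that adding such a bracket ``does not change the general form'' of \eqref{log-can}; you supply the reason this is true, namely the separability $c_{ij,pq}=h(p)+g(q)$, which is exactly what makes the sum of coefficients over each monomial of the determinant independent of the permutation (equivalently, what lets the row and column cofactor expansions resum the diagonal contribution). For the Sklyanin piece the paper cites Lemma 4.7 of \cite{GSV} for the upper blocks, gets the zero block from the commutation of opposite corners, and deduces $\beta_{ij}=\alpha_{ji}$ from transposition symmetry; you instead run a direct four-region computation using $\sum_{q}e_{iq}M_{pq}=\delta_{ip}\Delta^{+}_k$ and $\sum_{p}e_{pj}M_{pq}=\delta_{jq}\Delta^{+}_k$, which reproduces the block matrix $M_1$, i.e.\ the formula \eqref{albe}, with no external citation. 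Your route buys elementariness and an explicit derivation of all coefficients; the paper's route buys brevity and contact with the cluster-algebra literature. I checked your region bookkeeping (including the observation that in the bottom-left region the two sign factors cancel identically, which is the paper's ``corners commute'' statement) and it is sound, as is the mirror-image argument for $\Delta^{-}_k$.

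One caveat concerns your top-right region ($i\in R$, $j\in C$): the Casimir property of the $k\times k$ determinant for the $\mathfrak{gl}_k$ Sklyanin bracket must be invoked as an independent classical fact (or proved on the spot), and not as Theorem \ref{centr-Skl} of the paper, since the paper derives that theorem from the very relations \eqref{albe} you are proving --- quoting it would be circular. Fortunately the fact is classical and has a short direct proof: from \eqref{Skl} and Jacobi's formula,
\[
\{L_1,\det L_2\}=\det L\,\operatorname{tr}_2\!\left(L_2^{-1}[r,L_1L_2]\right)
=\det L\,\big[\operatorname{tr}_2(r),\,L_1\big]=0,
\]
because $\operatorname{tr}_2(L_2^{-1}rL_2)=\operatorname{tr}_2(r)=\tfrac12 E$ by similarity invariance of the trace, and a scalar matrix commutes with $L_1$. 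With this fact cited or proved in that way (applied to the submatrix $N$, whose rows and columns are increasing intervals, so its $w$-bracket is literally $2w$ times the $\mathfrak{gl}_k$ Sklyanin bracket), your argument is complete.
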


\begin{proof}
A general bracket of the family is a linear combination of the Sklyanin bracket and a log-canonical bracket. 
First of all, we show that the statement of the theorem holds for the Sklyanin bracket. Namely,
the coefficients $\alpha_{ij}$ can be represented by the matrix
 \begin{figure}[h!]
\center
\includegraphics[width=53mm]{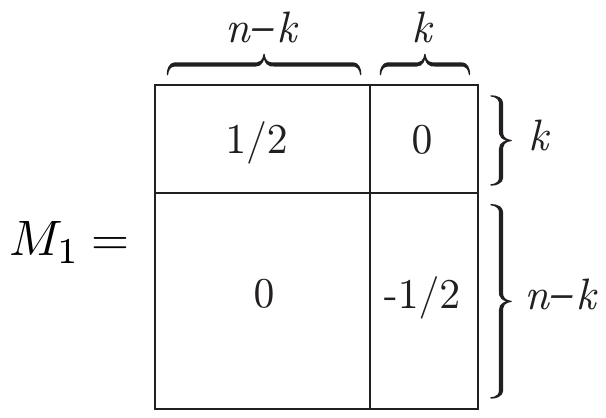}
\end{figure}

\noindent
whose entries are divided into blocks by rows: from $1$-st to $k$-th and from $k+1$-st to $n$-th, and by columns: from $1$-st to $n-k$-th, and from $n-k+1$-st to $n$-th.
The values in upper triangular blocks are a consequence of the lemma 4.7 \cite{GSV}.
The zero in the lower-left corner of the matrix follows from the fact that all elements of the lower-left corner commute with all elements of the upper-right corner by virtue of the Sklyanin commutation relations. 

To prove the relations \eqref{log-can} for the lower-left minors, we use the symmetry of the Sklyanin bracket with respect to the transposition involution
\bea
\{f^t,g^t\}=\{f,g\}^t.\nn
\eea
 This implies the equality 
 \bea
 \beta_{ij}(k,n)=\alpha_{ji}(k,n).\nn
 \eea
It is not difficult to verify that the answer for the Sklyanin bracket, defined by the matrix $M_1,$  can be written as
\bea
\alpha_{ij}(k,n)&=&1/4 \Big(\sign(k - i) + \sign(n - k - j) + 
 \delta_{k, i} + \delta_{n - k, j}\Big), \nn\\
 \beta_{ij}(k,n)&=&1/4 \Big(\sign(k - j) + \sign(n - k - i) + 
 \delta_{k, j} + \delta_{n - k, i}\Big). 
\label{albe}
\eea

Adding a log-canonical bracket to the Sklyanin bracket changes $\alpha_{ij}(k,n)$ and $\beta_{ij}(k,n)$ but does not change the general form of the relations (\ref{log-can}).
\end{proof}
\begin{Rem}
We believe that any bracket \eqref{PoissQuad2} satisfying the conditions \eqref{log-can} can be written in the form \eqref{quad-bracket}, \eqref{coe}. This thesis has been verified for $n\le 6$.
\end{Rem}

The formulas \eqref{log-can} explain the existence of central elements of the form
\begin{equation} \label{casim}
\prod_{k=1}^n \Big(\Delta^{+}_k\Big)^{p_k} \Big(\Delta^{-}_k\Big)^{q_k}, \qquad p_k, q_k \in \CC
\end{equation}
for many brackets \eqref{quad-bracket}, \eqref{coe}.

\begin{Rem}\label{Rem1.6}
The polynomials
$$
\Delta^{+}_1, \dots , \Delta^{+}_{n-1}, \, \Delta^{-}_1, \dots \dots , \Delta^{-}_{n}
$$
are functionally independent. Indeed, each subsequent polynomial from this set contains elements of $L$ that are not present in all the previous ones.
\end{Rem}

\section{4-dimensional family}

In this section we will limit ourselves to considering the subfamily of quadratic brackets of the form
\begin{equation} \label{4brackets}
\{e_{ij}, e_{kl}\}=\frac{z_1}{2} \Big(\sign(k - i) + \sign(l - j)\Big) e_{i l} e_{k j}+ 
\frac{z_2}{2} \Big(\delta_{i l} - \delta_{j k}\Big) e_{i j} e_{k l}+
\end{equation} 
$$ z_3 \Big(\sign(i - k) - \sign(j - k) - \sign(i - l) + \sign(j - l)\Big) e_{i j} e_{k l}+z_4 \Big(\sign(i - k) - 
\sign(j - l) \Big) e_{i j} e_{k l}.
$$ 
All such brackets are representable in the form 
(\ref{PoissQuad2})
\bea
\{L\otimes L\}_2=A L_1 L_2+L_1 B L_2+L_2 C L_1 +L_1 L_2 D.\nn
\eea
In the case $z_1=1, z_2=z_3=z_4=0$ the formula \eqref{4brackets} defines the Sklyanin bracket \eqref{Skl}
\bea
\label{r1_2}
A=-D = r=\frac 1 2\sum_{i} E_{ii}\otimes E_{ii}+\sum_{i<j}E_{ij}\otimes E_{ji}; \qquad B=C=0;
\eea
for the bracket with $z_2=1, z_1=z_3=z_4=0$ we get
\bea
\label{r2_2}
B=-C =- \frac 1 2 \sum_i E_{ii}\otimes E_{ii}, \qquad A=D=0;
\eea
the values  $z_3=1, z_1=z_2=z_4=0$ correspond to
\bea
\label{r3_2}
A=D=-B=-C=-\rho=- \frac 1 2 \sum_i E_{ii}\otimes E_{ii}-\sum_{i<j} E_{ii}\otimes E_{jj};
\eea
in the case $z_4=1, z_1=z_2=z_3=0$ we obtain
\bea
\label{r4_2}
A=-D=\sum_{i>j} e_{ii}\otimes e_{jj} - e_{jj}\otimes e_{ii}, \qquad  B=-C=\sum_{i,j} e_{ii}\otimes e_{jj}.
\eea

In the next section, using explicit coordinate formulas \eqref{4brackets} for the brackets of this four-dimensional family, we find their center in some cases.   Apparently, for other brackets from the family \eqref{quad-bracket} there is no such a simple 
 coordinate representation.

The linear Poisson brackets \eqref{PoissLin2}, which are obtained from the brackets \eqref{quad-bracket} by the shift of argument, are described in
Theorem \ref{ThLinPoiss}. Namely, for any $z_1,z_2, z_4$  we have $\{\mathfrak{n_+}, \mathfrak{n_-}\}_{lin}=0.$ For the elements of $\mathfrak{n_+}$ we get $\{X, Y\}=z_1 \{X, Y\}_{\mathfrak{gl}}$, and for the elements of $\mathfrak{n_-}$ the formula $\{X, Y\}=-z_1 \{X, Y\}_{\mathfrak{gl}}$ holds.

The dependence of the bracket on the parameters $z_2,z_4$ is contained in the formulas for the action of a commutative subalgebra
 $e_{ii}, \, i=1,\dots, n$ on $\mathfrak{n_+}$ and $\mathfrak{n_-}$. This action is uniquely determined by the formulas
$$
\{e_{kk}, e_{i,i+1}\}=\Big(\frac{1}{2} (z_1 - z_2 + 2 z_4)\, \delta_{k, i} +  \frac{1}{2} (-z_1 + z_2 + 2 z_4)\, \delta_{k, i + 1}\Big) e_{i,i+1}
$$
and
$$
\{e_{kk}, e_{i+1,i}\}=\Big(\frac{1}{2} (-z_1 - z_2 - 2 z_4)\, \delta_{k, i} +  \frac{1}{2} (z_1 + z_2 - 2 z_4)\, \delta_{k, i + 1}\Big) e_{i+1,i}.
$$
For some parameter sets, such linear brackets \eqref{PoissLin2} are well known. For example,
in the case of $z_1=z_2=1, \, z_4=0$
the bracket is related to the decomposition
\bea \label{bn}
\mathfrak{gl}_n = \mathfrak{n_+}\oplus \mathfrak{b}_-
\eea
and is defined by the formula
$$
\{X, Y\}=\frac{1}{2}\Big(\{R(X), Y\}_{gl}+\{X, R(Y)\}_{gl} \Big),
$$
where $R$  is the difference of the projectors corresponding to the decomposition of \eqref{bn}.

Let us note that the linear brackets of the considered family do not depend on $z_3$. 
The presence of the parameter $z_3$ in the quadratic brackets \eqref{4brackets} turns out to be important when constructing Poisson reductions in the case of different versions of the Toda chain. Some examples are given at the end of the next section.


\section{Centers of the quadratic  brackets}

For Poisson structures from the family \eqref{4brackets}  with integers $z_1,z_2,z_3$ and $z_4=0$, the centers of rings of rational functions on $\mathfrak{gl}_n^*$ can be explicitly found using the ansatz \eqref{casim}. Here are some examples.

A direct consequence of the relations (\ref{albe}) is the following
\begin{Th}
\label{centr-Skl}
The functions $\frac{\Delta^{+}_k}{\Delta^{-}_{n-k}}$ for all $k$  and ${\rm det}\,L$
belong to the center of the Sklyanin bracket.
\end{Th}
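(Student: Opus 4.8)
The plan is to use the log-canonical relations \eqref{log-can} together with the explicit formulas \eqref{albe}, exploiting that the Sklyanin bracket is a derivation in each argument and hence obeys the Leibniz and quotient rules. Since the $e_{ij}$ generate the whole Poisson algebra, it suffices to check that each candidate function has vanishing bracket with every generator $e_{ij}$.

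First I would treat the ratios. Applying the quotient rule and then substituting \eqref{log-can} gives
\begin{equation*}
\left\{e_{ij}, \frac{\Delta^{+}_k}{\Delta^{-}_{n-k}}\right\}
= \frac{e_{ij}\,\Delta^{+}_k}{\Delta^{-}_{n-k}}\Big(\alpha_{ij}(k,n)-\beta_{ij}(n-k,n)\Big),
\end{equation*}
so the statement for the ratios reduces to the single identity $\alpha_{ij}(k,n)=\beta_{ij}(n-k,n)$. I would then verify this directly from \eqref{albe}: substituting $k\mapsto n-k$ in the formula for $\beta$ yields
\begin{equation*}
\beta_{ij}(n-k,n)=\tfrac14\Big(\sign(n-k-j)+\sign(k-i)+\delta_{n-k,j}+\delta_{k,i}\Big),
\end{equation*}
which is term-by-term equal to $\alpha_{ij}(k,n)$. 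Hence every ratio $\Delta^{+}_k/\Delta^{-}_{n-k}$ is central.

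Finally, ${\rm det}\,L$ coincides, up to an irrelevant overall sign coming from the antidiagonal permutation, with $\Delta^{+}_n$, so it is enough to evaluate $\alpha_{ij}(n,n)$. Since $1\le i,j\le n$, one has $\sign(n-i)+\delta_{n,i}=1$ and $\sign(-j)+\delta_{0,j}=-1$ for all admissible $i,j$, whence $\alpha_{ij}(n,n)=0$ and $\{e_{ij},{\rm det}\,L\}=0$; equivalently this is the degenerate case $k=n$ of the ratios, with $\Delta^{-}_0\equiv 1$.

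The computation carries no genuine difficulty once \eqref{albe} is in hand; the only point requiring attention is the bookkeeping in the symmetry identity $\alpha_{ij}(k,n)=\beta_{ij}(n-k,n)$, which is precisely what forces the pairing of $\Delta^{+}_k$ with $\Delta^{-}_{n-k}$ (rather than with any other lower-left minor) and thereby dictates the shape of the central monomials \eqref{casim}.
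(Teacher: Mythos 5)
Your proposal is correct and follows exactly the route the paper intends: the paper states this theorem as a ``direct consequence of the relations (\ref{albe})'' without spelling out details, and your computation (quotient rule plus the identity $\alpha_{ij}(k,n)=\beta_{ij}(n-k,n)$, with the $k=n$ case giving $\alpha_{ij}(n,n)=0$ for ${\rm det}\,L$) is precisely the verification being left implicit.
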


\begin{Th}
\label{centr-1100}
In the case of the bracket $z_1=z_2=1,\, z_3=z_4=0$ the functions $\frac{\Delta^{-}_k}{\Delta^{-}_{n-k}}$, $ \Delta^{+}_k \Delta^{+}_{n-k}$ and  ${\rm det}\,L$ belong to the center of the Poisson algebra. 
\end{Th}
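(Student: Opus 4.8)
The plan is to reduce the centrality of each of the three functions to a linear condition on the log-canonicity coefficients of the bracket, and then to verify that condition by a short, index-free computation. By the relations \eqref{log-can} and the Leibniz rule, for any monomial $F=\prod_k (\Delta^{+}_k)^{p_k}(\Delta^{-}_k)^{q_k}$ one has
$$
\{e_{ij}, F\}=\Big(\sum_k p_k\,\alpha_{ij}(k,n)+\sum_k q_k\,\beta_{ij}(k,n)\Big)\,e_{ij}\,F,
$$
and the same holds, with the obvious signs, for ratios of minors, since log-canonicity is multiplicative. Hence $F$ lies in the center exactly when the weighted sum of coefficients vanishes for every pair $(i,j)$.

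Next I would assemble the coefficients for the bracket $z_1=z_2=1,\ z_3=z_4=0$. This bracket is the sum of the Sklyanin bracket and the log-canonical $z_2$-term, so its coefficients split as $\alpha_{ij}=\alpha^{\mathrm{Skl}}_{ij}+\alpha^{(2)}_{ij}$ and likewise for $\beta$. The Sklyanin part is \eqref{albe}, which I would first rewrite in the compact indicator form
$$
\alpha^{\mathrm{Skl}}_{ij}(k,n)=\tfrac12\big([i\le k]+[j\le n-k]-1\big),\qquad
\beta^{\mathrm{Skl}}_{ij}(k,n)=\tfrac12\big([j\le k]+[i\le n-k]-1\big).
$$
The $z_2$-term $\tfrac12(\delta_{il}-\delta_{jk})e_{ij}e_{kl}$ is already log-canonical on generators, and summing its coefficient over any monomial of the determinant $\Delta^{+}_k$ (respectively $\Delta^{-}_k$) depends only on the fixed row and column index sets of the minor, not on the particular permutation; this both confirms that log-canonicity persists and yields
$$
\alpha^{(2)}_{ij}(k,n)=\tfrac12\big([i> n-k]-[j\le k]\big),\qquad
\beta^{(2)}_{ij}(k,n)=\tfrac12\big([i\le k]-[j> n-k]\big).
$$

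Finally I would substitute and check the three conditions. After combining the two contributions the totals simplify to
$$
\alpha_{ij}(k,n)=\tfrac12\big([i\le k]-[j\le k]+[j\le n-k]-[i\le n-k]\big),
$$
$$
\beta_{ij}(k,n)=\tfrac12\big([i\le k]+[j\le k]+[i\le n-k]+[j\le n-k]-2\big).
$$
Then $\det L=\Delta^{+}_n$ is central because $\alpha_{ij}(n,n)=0$ termwise; $\Delta^{+}_k\Delta^{+}_{n-k}$ is central because the expression for $\alpha$ is manifestly \emph{antisymmetric} under $k\mapsto n-k$, so $\alpha_{ij}(k,n)+\alpha_{ij}(n-k,n)=0$; and $\Delta^{-}_k/\Delta^{-}_{n-k}$ is central because the expression for $\beta$ is manifestly \emph{symmetric} under $k\mapsto n-k$, so $\beta_{ij}(k,n)=\beta_{ij}(n-k,n)$. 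The only genuinely new input beyond the earlier log-canonicity theorem is the closed-form evaluation of the $z_2$-contribution and its merger with \eqref{albe}; I expect this bookkeeping to be the main obstacle, after which the three centrality checks collapse to the stated symmetry of the indicator formulas under $k\mapsto n-k$.
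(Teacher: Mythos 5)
Your proof is correct and follows essentially the same route as the paper: the same decomposition into the Sklyanin bracket plus the log-canonical $z_2$-term, the same computation of each term's contribution to the coefficients in \eqref{log-can}, and the same final verification of centrality. The only difference is presentational — you check the cancellations via the antisymmetry/symmetry of closed-form indicator expressions under $k\mapsto n-k$, whereas the paper reads the same cancellations off the block matrices $M_1+M_2$ and $M_1^t-M_2^t$ shown in its figures.
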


\begin{proof}
Let us show that the coefficients (\ref{log-can}) in the case $z_1=z_2=1,\, z_3=z_4=0$ have the form
\bea
 \alpha_{ij}(k,n)&=& 1/4 \Big(\sign(k - i) -\sign(k - j) - \sign(n - k - i) + 
   \sign(n - k - j)\nn\\
& +& \delta_{k, i} - \delta_{k, j} - \delta_{
    n - k, i} + \delta_{n - k, j}\Big),\nn\\
\beta_{ij}(k,n) &=& 1/4 \Big(\sign(k - i) +\sign(k - j) + \sign(n - k - i) + 
   \sign(n - k - j) \nn\\
&+& \delta_{k, i} + \delta_{k, j} + \delta_{
    n - k, i} + \delta_{n - k, j}\Big).\nn
\eea
The bracket under consideration is the sum of the Sklyanin bracket and the bracket of the form
\bea
\{e_{ij},e_{kl}\}_2=\frac 1 2 \Big(\delta_{i l} - \delta_{j k}\Big) e_{i j} e_{k l}.\nn
\eea
For the latter bracket we have
\bea
\{e_{ij}, \Delta^{+}_k\}_2= \alpha^{(2)}_{ij}e_{ij} \Delta^{+}_k,
\eea
where the coefficients $\alpha^{(2)}_{ij}$ form the following matrix $M_2$:
 \begin{figure}[h!]
\center
\includegraphics[width=53mm]{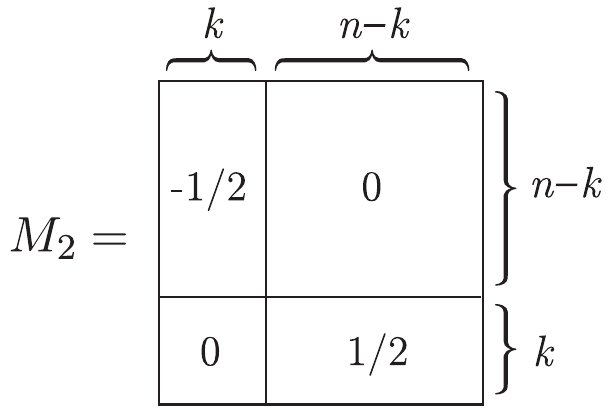}
\end{figure}

\noindent Indeed, consider the first term 
\bea
\{e_{ij},e_{kl}\}_{2,1}=\frac 1 2\delta_{il}e_{ij} e_{kl}\nn
\eea
of the bracket.
It is nontrivial only if the row number of the generator $e_{ij}$ coincides with the column number of the generator $e_{kl}$. Let us now calculate the bracket $e_{ij}$ with the upper-right minor $\Delta_{k}^+$. We introduce notation for the index sets $I=(1,\ldots,k)$ and $J=(n-k+1,n).$ Note that if $i\notin J,$ then the bracket is equal to zero. If, on the contrary, $i\in J$, then we use the decomposition of $\Delta_{k}^+$ by the $i$-th column
\bea
\Delta_{k}^+=\sum_{s=1}^k(-1)^{n-i+k+s}e_{si}\Delta_{J \setminus i}^{I \setminus s}\nn
\eea
to get
\bea
\{e_{ij},\Delta_k^+\}_{2,1}=\{e_{ij},\Delta^I_J\}_{2,1}=\sum_{s=1}^k(-1)^{n-i+k+s}\{e_{ij},e_{si}\}_{2,1}\Delta_{J \setminus i}^{I \setminus s}=\frac 1 2 e_{ij} \Delta_k^+.\nn
\eea
A similar calculation for the second term of the bracket yields
\bea
\{e_{ij},\Delta_k^+\}_{2,2}=-\frac 1 2 e_{ij} \Delta_k^+\nn
\eea
if and only if $j\in I.$

The coefficients for the sum of brackets $\{\cdot,\cdot\}_{(1,1)}=\{\cdot,\cdot\}+\{\cdot,\cdot\}_{2}$  can be represented by the matrix $M_1+M_2$. If  $k\le n/2$ this matrix is shown in Fig. \ref{pic-m1+m2}
\begin{figure}[h!]
\center
\includegraphics[width=4cm]{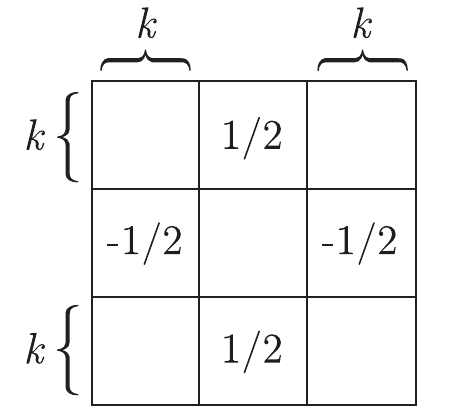}
\caption{}
\label{pic-m1+m2}
\end{figure}
 and for $k> n/2$ - in Fig. \ref{pic-m1+m2b}.  
 \begin{figure}[h!]
\center
\includegraphics[width=4cm]{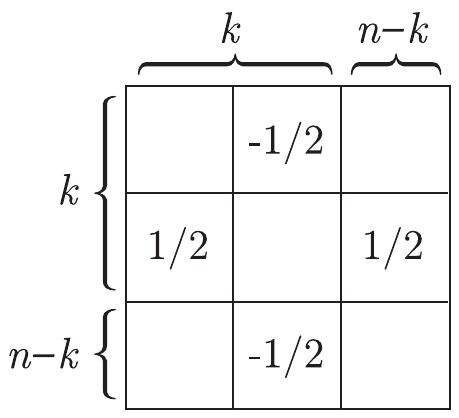}
\caption{}
\label{pic-m1+m2b}
\end{figure}

Note that the transposition operation $e_{ij}^t$ is an anti-involution of the bracket $\{\cdot,\cdot\}_{2}$ , so when calculating
$$\{e_{ij}, \Delta^{-}_k\}_2= \beta^{(2)}_{ij} e_{ij} \Delta^{-}_k,$$
we obtain
\bea
\beta^{(2)}=-M_2^t.\nn
\eea
Thus, the coefficients $\beta_{ij}$ for the brackets $\{\cdot,\cdot\}_{(1,1)}$ are defined by the matrix
\bea
M_1^t-M_2^t.\nn
\eea
This matrix for $k\le n/2$ has the form Fig. \ref{pic-m1-m2}
\begin{figure}[h!]
\center
\includegraphics[width=4cm]{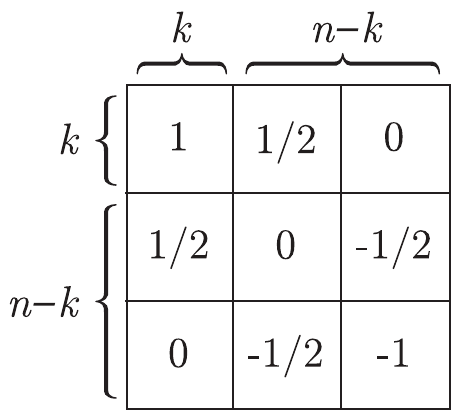}
\caption{}
\label{pic-m1-m2}
\end{figure}
and for  $k> n/2$  - on Fig. \ref{pic-m1-m2b}.
\begin{figure}[h!]
\center
\includegraphics[width=4cm]{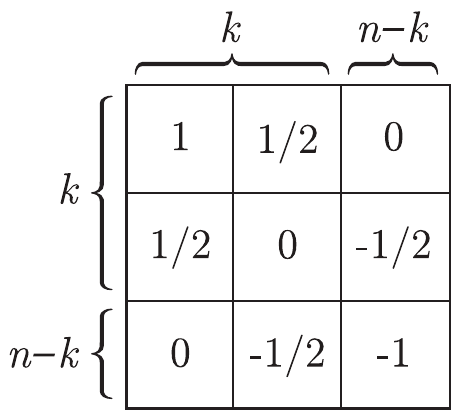}
\caption{}
\label{pic-m1-m2b}
\end{figure}
It follows directly from this that the functions $\frac{\Delta^{-}_k}{\Delta^{-}_{n-k}}$, $ \Delta^{+}_k \Delta^{+}_{n-k}$ together with the function ${\rm det}\,L$ lie in the center of the bracket under consideration.
\end{proof}

Similar calculations can be used to obtain centers in some other cases.
\begin{Rem}\label{centr-1110}
If $z_1=z_2=z_3=1,\, z_4=0$ then $\frac{\Delta^{+}_k}{\Delta^{+}_{n-k+1}}$, $\Delta^{+}_{1}$, 
 $\Delta^{+}_{n}$ and 
$\frac{\Delta^{-}_k}{\Delta^{-}_{n-k}}$  belong to the center. In this case there is a Poisson reduction procedure leading to a bi-Hamiltonian structure for the Constant-Toda system.  \cite{Kost}.
\end{Rem}
\begin{Rem}
\label{centr-11-10}
In the case $z_1=z_2=1, z_3=-1, z_4=0$ the functions $\frac{\Delta^{+}_k}{\Delta^{+}_{n-k-1}}$, $\Delta^{+}_{n-1}$, 
 $\Delta^{+}_{n}$ and $\frac{\Delta^{-}_k}{\Delta^{-}_{n-k}}$ lie in the center.
\end{Rem}
\begin{Rem}
In the case $k_1=2, k_3=-1, k_2=k_4=0$ the bracket admits a Poisson reduction leading to a bi-Hamiltonian structure for the standard tridiagonal open Toda chain. In this case, the center is generated by functions of the form \eqref{casim} with a rather complicated set of integer exponents $p_i, q_i.$
\end{Rem}
\begin{Hyp}
We expect that the families of functions, listed in theorems \ref{centr-Skl}, \ref{centr-1100} and remarks \ref{centr-1110}, \ref{centr-11-10}, generate the corresponding centers. Taking into account Remark \ref{Rem1.6}, it suffices to check that the rank of the Poisson tensor in these cases is not less than $n^2-n$. For small $n$ this can be verified directly.
\end{Hyp}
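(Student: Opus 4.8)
The plan is to pin down the corank of the Poisson tensor by squeezing it between two bounds. Recall that for a rational Poisson structure the transcendence degree of the field of Casimir functions equals the corank $n^2-\operatorname{rank}$ of the bivector at a generic point. Theorems \ref{centr-Skl}, \ref{centr-1100} and Remarks \ref{centr-1110}, \ref{centr-11-10} already exhibit the listed functions as Casimirs, so two things remain: (i) each family contains exactly $n$ functionally independent functions, giving $\operatorname{corank}\ge n$; and (ii) the generic rank of the bivector is at least $n^2-n$, giving $\operatorname{corank}\le n$. Together these force $\operatorname{corank}=n$ and identify the field of Casimirs with the field generated by the exhibited functions.

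Step (i) I would reduce to linear algebra. Every function in the four families is a Laurent monomial in $\Delta^{+}_1,\dots,\Delta^{+}_{n-1},\Delta^{-}_1,\dots,\Delta^{-}_{n}$ (with $\det L=\Delta^{+}_n=\Delta^{-}_n$), and by Remark \ref{Rem1.6} these $2n-1$ minors are functionally independent. Hence the Jacobian of a collection of such monomials factors through the logarithm, and functional independence of the monomials is equivalent to linear independence of their integer exponent vectors in $\mathbb{Z}^{2n-1}$. This turns (i) into an explicit rank computation over $\mathbb{Z}$, carried out family by family; in each case, after separating the differences of $\Delta^{-}$-logs, the (anti)symmetric combinations of $\Delta^{+}$-logs and the $\det L$ contribution, one checks that precisely $n$ of the exponent vectors are independent.

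The real difficulty is step (ii), the inequality $\operatorname{rank}\ge n^2-n$, which is exactly the assertion that the exhibited Casimirs leave no room for further independent central elements. Since the rank of the matrix $\big(\{e_{ij},e_{kl}\}(L)\big)$ is lower semicontinuous in $L$, it suffices to produce a single $L_0\in\mathfrak{gl}_n$ at which this matrix has rank at least $n^2-n$; the generic rank is then at least $n^2-n$ as well. A natural choice is a generic regular diagonal $L_0$, for which the coordinate formulas \eqref{4brackets} simplify drastically and one may hope to exhibit a nonvanishing $(n^2-n)\times(n^2-n)$ minor explicitly. I expect this to be the main obstacle: for each fixed $n$ it is a finite (if sizeable) determinant computation, which is what ``for small $n$ this can be verified directly'' refers to, but a proof valid for all $n$ seems to require an explicit description of the generic symplectic leaves of the quadratic bracket, or a stable family of $n^2-n$ coordinates on which the restricted bivector is manifestly nondegenerate. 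The Lie-theoretic reading of the compatible linear bracket in Theorem \ref{ThLinPoiss} and the dressing-orbit origin of the Sklyanin-type brackets suggest where such a leaf description should come from, and that is the route I would pursue to remove the restriction to small $n$.
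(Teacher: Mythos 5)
First, note that the statement you were asked to prove is a \emph{conjecture} in the paper: the authors give no proof, only the very reduction you rediscover (by Remark \ref{Rem1.6}, exhibit $n$ independent Casimirs, then show $\operatorname{rank}\geq n^2-n$, which ``for small $n$ can be verified directly''). So your overall squeeze $n\leq\operatorname{corank}\leq n$ matches the paper's intended outline, your step (i) --- reducing functional independence of Laurent monomials in the $2n-1$ minors of Remark \ref{Rem1.6} to linear independence of exponent vectors over $\mathbb{Z}$ --- is sound, and, like the paper, you leave the rank bound open for general $n$. To that extent the proposal is a fair treatment of an open statement.

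However, the one concrete device you propose for step (ii) provably fails. Every bracket of the family \eqref{4brackets} (indeed of the full family \eqref{quad-bracket}, \eqref{coe}) vanishes \emph{identically} at diagonal matrices. Indeed, if $L_0$ is diagonal, the term $e_{il}e_{kj}$ survives only when $i=l$, $k=j$, and then its prefactor $\tfrac12\bigl(\sign(k-i)+\sign(l-j)\bigr)=\tfrac12\bigl(\sign(j-i)+\sign(i-j)\bigr)$ vanishes; the $z_2,z_3,z_4$ terms survive only when $i=j$, $k=l$, and then their sign/delta prefactors cancel in the same way. So diagonal matrices are zero-dimensional symplectic leaves: the Poisson tensor at your proposed $L_0$ has rank $0$, and lower semicontinuity yields nothing. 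The small-$n$ verification must instead be run at a genuinely generic point (all entries of $L_0$ generic), which is presumably what the authors mean; no choice with extra structure that kills the quadratic terms can work. Separately, your concluding sentence of the first paragraph overreaches: establishing $\operatorname{corank}=n$ only identifies the \emph{transcendence degree} of the field of Casimirs with that of the subfield generated by the listed functions; equality of the fields themselves could still fail by a finite algebraic extension (compare $\mathbb{C}(x^2)\subsetneq\mathbb{C}(x)$), so ``generate'' needs an extra argument, e.g.\ saturation of the exponent lattice of the listed monomials. This last gloss is present in the paper's own formulation as well, so it is a shared rather than a new gap --- but the diagonal test point is an error specific to your proposal.
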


\section{Centers of the linear brackets}
Let us recall the general construction of commutative Poisson families by the argument shift method (see, for example\cite{magri1}).
Let $c(L)$ be an element of the center of a quadratic Poisson bracket $\{\cdot, \cdot\}$ of the form \eqref{PoissQuad2}. Then $c(L+\lambda E)$ belongs to the center of the pencil of compatible brackets 
\bea
\label{pencil1}
\{\cdot, \cdot \}_{\lambda}=\{\cdot, \cdot \}+\lambda \{\cdot, \cdot \}_{lin},
\eea 
obtained from \eqref{PoissQuad2} by the argument shift. 
\begin{Lem}
\label{arg_shift}
Suppose that $c(L+\lambda E)$ is a polynomial in $\lambda$:
\bea
\label{Polyn}
c(L+\lambda E)=\sum_{i=0}^m c_i(L) \lambda^{i}.
\eea
Then,
the coefficients $c_i$ of this family commute with each other with respect to both quadratic and linear brackets. In this case, the function $c_m$ belongs to the center of the linear bracket  $\{\cdot, \cdot \}_{lin}$ 
\end{Lem}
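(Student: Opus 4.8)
The plan is to work entirely from the fact recalled just before the statement, namely that the shifted element $c(L+\lambda E)$ is central for the whole pencil $\{\cdot,\cdot\}_{\lambda}=\{\cdot,\cdot\}+\lambda\{\cdot,\cdot\}_{lin}$ of \eqref{pencil1}, so that $\{c(L+\lambda E),f\}_{\lambda}=0$ for every function $f$ and every value of $\lambda$. First I would substitute the polynomial expansion \eqref{Polyn} together with $\{\cdot,\cdot\}_{\lambda}=\{\cdot,\cdot\}+\lambda\{\cdot,\cdot\}_{lin}$ into this identity and collect powers of $\lambda$. Since each $c_i$ is a $\lambda$-independent function, this yields, for every $j$, the Lenard--Magri relation
$$\{c_j,f\}+\{c_{j-1},f\}_{lin}=0,\qquad c_{-1}:=0,\ c_{m+1}:=0,$$
valid for all $f$. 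The two extreme coefficients are the endpoints of the recursion: the coefficient of $\lambda^{0}$ says that $c_0$ (which is just $c$ itself) is a Casimir of the quadratic bracket, while the coefficient of $\lambda^{m+1}$ reads $\{c_m,f\}_{lin}=0$ for all $f$. The latter is precisely the assertion that $c_m$ lies in the center of the linear bracket, so the last claim of the lemma follows at once.

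For the involutivity I would use the standard two-parameter argument. Evaluating the centrality of $c(L+\lambda E)$ on the function $c(L+\mu E)$ gives $\{c(L+\lambda E),c(L+\mu E)\}_{\lambda}=0$; running the same reasoning with the roles of $\lambda$ and $\mu$ interchanged, and using antisymmetry, gives $\{c(L+\lambda E),c(L+\mu E)\}_{\mu}=0$. Writing both identities through $\{\cdot,\cdot\}_{\lambda}=\{\cdot,\cdot\}+\lambda\{\cdot,\cdot\}_{lin}$ and subtracting, the quadratic terms cancel and one is left with $(\lambda-\mu)\{c(L+\lambda E),c(L+\mu E)\}_{lin}=0$. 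Hence the linear bracket of the two shifted elements vanishes for $\lambda\neq\mu$, and by polynomiality in $(\lambda,\mu)$ it vanishes identically; feeding this back into either identity shows the quadratic bracket of the two shifted elements vanishes as well. Expanding both vanishing expressions as double power series in $\lambda$ and $\mu$ and matching the coefficient of $\lambda^{i}\mu^{j}$ then gives $\{c_i,c_j\}=\{c_i,c_j\}_{lin}=0$ for all $i,j$, which is the commutativity claim.

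The computations are routine; the care needed is essentially bookkeeping. One must first be certain the pencil genuinely has no $\lambda^{2}$ term, so that the argument-shift image of the quadratic bracket is exactly the two-term bracket \eqref{pencil1} rather than a bracket truly quadratic in $\lambda$; this is guaranteed by $A+B+C+D=0$, which holds for \eqref{quad-bracket}, \eqref{coe} since the four coefficients there sum to zero. Second, the passage from ``vanishing for $\lambda\neq\mu$'' to ``vanishing identically'' relies on everything being polynomial in the shift parameters, so the factor $(\lambda-\mu)$ may be cancelled. I expect the only mild subtlety to be keeping the symmetric roles of $\lambda$ and $\mu$ straight when deriving the second identity; alternatively, one can dispense with the second parameter and obtain involutivity directly from the single recursion above, where antisymmetry together with the telescoping relation $\{c_{i-1},c_j\}_{lin}=\{c_i,c_{j-1}\}_{lin}$ forces $\{c_i,c_j\}_{lin}$ to be constant on anti-diagonals and hence zero, whence $\{c_i,c_j\}=0$ as well.
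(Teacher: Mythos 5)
Your proof is correct and complete. The paper itself contains no proof of Lemma \ref{arg_shift} --- it is stated as a recollection of the standard argument-shift construction, with a citation to \cite{magri1} --- and what you wrote (the coefficient-of-$\lambda^{j}$ recursion yielding $\{c_m,\cdot\}_{lin}=0$, plus the two-parameter Lenard--Magri trick for involutivity with respect to both brackets) is precisely the classical argument that citation stands for, built correctly on the premise the paper states just before the lemma, namely that $c(L+\lambda E)$ is central for the pencil \eqref{pencil1}. One small point in your alternative single-parameter sketch: to pass from ``$\{c_i,c_j\}_{lin}$ is constant along anti-diagonals'' to ``it vanishes'' you must check the boundary values, e.g.\ $\{c_i,c_0\}_{lin}=0$, which follows from the recursion combined with the fact that $c_0=c(L)$ is a Casimir of the quadratic bracket (equivalently, from the convention $c_{-1}=0$); your main two-parameter argument requires no such care.
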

\begin{Rem}
\label{sub_dom}
The statement of Lemma \ref{arg_shift} can be clarified: the highest non-constant coefficient of the expansion is central.
It is also known that if there are two central elements of the bracket \eqref{pencil1} polynomial in  $\lambda$, then the entire set of their coefficients generates a commutative Poisson subalgebra with respect to both brackets.
\end{Rem}

In the case where $c(L)$ has the form \eqref{casim}, the above considerations in many cases make it possible to express the central elements of the linear bracket in terms of the coefficients of antidiagonal minors of the matrix $L+\lambda E.$

Consider the case $z_1=z_2=1,z_3=z_4=0$, for which the linear bracket corresponds to the expansion \eqref{bn}. If $k>n/2$ the elements of the center of the quadratic bracket described in Theorem \ref{centr-1100} , after shifting the argument, become polynomials in $\lambda$ of the form
 \bea\label{KazLam}
\frac {\Delta_k^-}{\Delta_{n-k}^-}(L+\lambda E)=\lambda^{2k-n}+\lambda^{2k-n-1}\Delta_k^*+\ldots,\qquad \quad \Delta_k^+ \Delta_{n-k}^+(L+\lambda E)=\lambda^{2k-n}(\Delta_{n-k}^+)^2+\ldots.
\eea
This implies
\begin{Lem}
\label{Centr_lin_1100}
 In the case  $z_1=z_2=1,z_3=z_4=0$ the coefficients $\Delta_k^* $ and antidiagonal minors 
$\Delta_{n-k}^+ $ for  $k>n/2$ belong to the center of the linear bracket.
\end{Lem}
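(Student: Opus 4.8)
The plan is to obtain this statement as a direct application of the argument-shift machinery of Lemma \ref{arg_shift} and Remark \ref{sub_dom} to the two families of Casimirs of the quadratic bracket supplied by Theorem \ref{centr-1100}. Concretely, I would feed the central functions $\frac{\Delta_k^-}{\Delta_{n-k}^-}$ and $\Delta_k^+\Delta_{n-k}^+$ into the substitution $L\mapsto L+\lambda E$; by Lemma \ref{arg_shift} the resulting $\lambda$-coefficients Poisson-commute for both brackets, and the distinguished coefficient singled out in Remark \ref{sub_dom} lands in the center of $\{\cdot,\cdot\}_{lin}$. Everything then reduces to reading off the $\lambda$-expansions \eqref{KazLam}.

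The first step is to justify \eqref{KazLam} by tracking how the diagonal shift meets the corner minors under the hypothesis $k>n/2$. The upper-right minor $\Delta_k^+$ occupies rows $1,\dots,k$ and columns $n-k+1,\dots,n$, so $\lambda$ is added exactly at the main-diagonal positions $(i,i)$ with $n-k+1\le i\le k$, i.e.\ at $2k-n$ entries lying in distinct rows and columns. Selecting all of them in the determinant yields the top power $\lambda^{2k-n}$ whose coefficient is the complementary corner minor, which is precisely $\Delta_{n-k}^+$ (the surviving rows $1,\dots,n-k$ and columns $k+1,\dots,n$). Since $n-k<n/2<k$, the smaller minor $\Delta_{n-k}^+$ receives no diagonal shift and is $\lambda$-independent; multiplying gives $\Delta_k^+\Delta_{n-k}^+(L+\lambda E)=\lambda^{2k-n}(\Delta_{n-k}^+)^2+\ldots$. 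The identical bookkeeping for the lower-left minors gives $\Delta_k^-(L+\lambda E)=\lambda^{2k-n}\Delta_{n-k}^-+\ldots$ with $\Delta_{n-k}^-$ again $\lambda$-free, so dividing produces $\frac{\Delta_k^-}{\Delta_{n-k}^-}(L+\lambda E)=\lambda^{2k-n}+\lambda^{2k-n-1}\Delta_k^*+\ldots$. (A short check on $n=3,\,k=2$ and $n=4,\,k=3$ shows the top coefficient is $(-1)^{2k-n}$ rather than $1$, so any sign or scalar appearing there is understood to be absorbed by a normalization and is immaterial to what follows.)

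With \eqref{KazLam} in hand the two conclusions fall into the two regimes of the argument shift. In the first expansion the leading coefficient is a nonzero constant, so the highest \emph{non-constant} coefficient is $\Delta_k^*$ (the coefficient of $\lambda^{2k-n-1}$); by Remark \ref{sub_dom} it belongs to the center of $\{\cdot,\cdot\}_{lin}$. In the second expansion the leading coefficient is a nonzero scalar multiple of $(\Delta_{n-k}^+)^2$, itself non-constant, so Lemma \ref{arg_shift} places it in the center of the linear bracket; since the bracket acts as a derivation and we work in the integral domain of rational functions on $\mathfrak{gl}_n^*$, the identity $\{(\Delta_{n-k}^+)^2,g\}=2\,\Delta_{n-k}^+\{\Delta_{n-k}^+,g\}=0$ together with $\Delta_{n-k}^+\ne 0$ forces $\Delta_{n-k}^+$ to be central, as claimed.

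The main obstacle I anticipate is the combinatorial verification underpinning \eqref{KazLam}: one must confirm that the $2k-n$ shifted positions are simultaneously selectable (distinct rows and columns), identify their complementary block as the correct smaller antidiagonal minor, and check that for $k>n/2$ the minors $\Delta_{n-k}^{\pm}$ genuinely carry no $\lambda$. Once this degree and leading-coefficient count is secured, the descent to centrality is purely formal.
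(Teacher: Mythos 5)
Your proposal is correct and follows essentially the same route as the paper: the paper likewise obtains the lemma by feeding the Casimirs of Theorem \ref{centr-1100} into the argument shift, reading off the expansions \eqref{KazLam}, and invoking Lemma \ref{arg_shift} together with Remark \ref{sub_dom}. You in fact supply details the paper leaves implicit --- the combinatorial verification of \eqref{KazLam} with its leading sign, and the derivation-plus-integral-domain step extracting centrality of $\Delta_{n-k}^+$ from that of $\left(\Delta_{n-k}^+\right)^2$ --- so your write-up is, if anything, more complete than the original.
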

All coefficients of decompositions  \eqref{KazLam} with different $k$ form a commutative Poisson subalgebra for linear and quadratic brackets. Note that for $k=n$, the first decomposition yields the coefficients of the characteristic polynomial of the matrix $L$.

Here is a similar calculation for the Sklyanin bracket, based on   Theorem \ref{centr-Skl}.
We introduce the following notation for the expansion coefficients for $k\le n/2$:
\bea
\frac {\Delta_{n-k}^+} {\Delta_{k}^-}(L+\lambda E)=\lambda^{n-2k}\frac {\Delta_{k}^+}{\Delta_{k}^-}+\lambda^{n-2k-1} \delta_k^{+}+\ldots\nn
\eea
and 
\bea
\frac {\Delta_{n-k}^-}{\Delta_k^+} (L+\lambda E)=\lambda^{n-2k}\frac {\Delta_{k}^-}{\Delta_{k}^+}+\lambda^{n-2k-1} \delta_k^{-}+\ldots.\nn
\eea
\begin{Lem}  
\label{Centr_lin_1000}
For $k\le n/2$, the expressions $\Delta_k^+/\Delta_k^-$ and $\delta_k^-\Delta_k^+/\Delta_k^-+\delta_{k}^+\Delta_k^-/\Delta_k^+$ belong to the center of the linear bracket (\ref{PoissLin2}), (\ref{r1_2}).
\end{Lem}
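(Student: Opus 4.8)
The plan is to derive both central elements from the argument-shift machinery of Lemma~\ref{arg_shift} and Remark~\ref{sub_dom}, applied to two functions that are already central for the Sklyanin bracket by Theorem~\ref{centr-Skl}. Since $\Delta^{+}_k/\Delta^{-}_{n-k}$ is central for every $k$, replacing $k$ by $n-k$ shows $f:=\Delta^{+}_{n-k}/\Delta^{-}_k$ is central, and passing to the reciprocal of the original function shows $g:=\Delta^{-}_{n-k}/\Delta^{+}_k$ is central as well. The first point I would record is that for $k\le n/2$ the minors $\Delta^{-}_k$ (rows $n-k+1,\dots,n$, columns $1,\dots,k$) and $\Delta^{+}_k$ (rows $1,\dots,k$, columns $n-k+1,\dots,n$) contain no positions on the main diagonal of $L$, so they are untouched by the shift $L\mapsto L+\lambda E$. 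Hence $f(L+\lambda E)$ and $g(L+\lambda E)$ are genuine polynomials in $\lambda$ (numerator polynomial, denominator $\lambda$-independent), and Lemma~\ref{arg_shift} applies; their top coefficients are exactly $F:=\Delta^{+}_k/\Delta^{-}_k$ and $G:=\Delta^{-}_k/\Delta^{+}_k$, as in the expansions stated before the Lemma.

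For the first assertion I would apply Remark~\ref{sub_dom} directly to $f$: its leading coefficient $F=\Delta^{+}_k/\Delta^{-}_k$ is non-constant, hence it is the highest non-constant coefficient and therefore lies in the center of the linear bracket \eqref{PoissLin2}, \eqref{r1_2}. This gives the first expression of the Lemma with no further work.

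The second assertion rests on the identity $FG=1$. The product $fg=\Delta^{+}_{n-k}\Delta^{-}_{n-k}/(\Delta^{-}_k\Delta^{+}_k)$ is again Sklyanin-central, since a product of central functions is central by the Leibniz rule, and its denominator is again $\lambda$-independent, so $fg(L+\lambda E)$ is polynomial in $\lambda$. Multiplying the two expansions gives
\[
fg(L+\lambda E)=\lambda^{2(n-2k)}\,FG+\lambda^{2(n-2k)-1}\big(F\,\delta^{-}_k+G\,\delta^{+}_k\big)+\cdots,
\]
and the leading coefficient collapses to the constant $FG=1$. Consequently the highest non-constant coefficient is $F\delta^{-}_k+G\delta^{+}_k=\delta^{-}_k\,\Delta^{+}_k/\Delta^{-}_k+\delta^{+}_k\,\Delta^{-}_k/\Delta^{+}_k$, which by Remark~\ref{sub_dom} belongs to the center of the linear bracket. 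This is precisely the second expression, and the packaging through $fg$ is essential: the individual subleading coefficients $\delta^{+}_k,\delta^{-}_k$ need not be central on their own, only this specific combination is.

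The routine part is the combinatorial identification of the leading minors, i.e. the claim that $\Delta^{+}_{n-k}(L+\lambda E)$ has top term $\lambda^{n-2k}$ times $\Delta^{+}_k$: this is just the highest-degree term of the determinant of a $(n-k)\times(n-k)$ block whose $\lambda$-carrying entries occupy the $n-2k$ diagonal positions inside it, the complementary minor being the upper-right $k\times k$ block; the lower-left case follows by transposition. The only genuinely delicate step is the vanishing of the leading coefficient of the product, namely verifying $FG=1$ and observing that this is exactly what demotes the nontrivial central datum from the top coefficient to the subleading one. Once that is in place, Remark~\ref{sub_dom} closes both claims uniformly.
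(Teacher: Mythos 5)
Your proof is correct and takes essentially the same route as the paper: the first expression is obtained as the top coefficient of the shifted Sklyanin Casimir $\Delta^{+}_{n-k}/\Delta^{-}_k$ via Lemma~\ref{arg_shift}, and the second as the highest non-constant coefficient of the product $\frac{\Delta^{+}_{n-k}}{\Delta^{-}_k}(L+\lambda E)\cdot\frac{\Delta^{-}_{n-k}}{\Delta^{+}_k}(L+\lambda E)$, whose leading coefficient degenerates to the constant $1$, so that Remark~\ref{sub_dom} applies. Your write-up merely makes explicit two details the paper leaves implicit (that for $k\le n/2$ the minors $\Delta^{\pm}_k$ contain no diagonal entries and are thus unaffected by the shift, and the cancellation $FG=1$), which is a useful but not essentially different elaboration.
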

\begin{proof}
The statement for expressions $\Delta_k^+/\Delta_k^-$ follows from Lemma \ref{arg_shift}. The fact that $\delta_k^-\Delta_k^+/\Delta_k^- +\delta_{k}^+\Delta_k^-/\Delta_k^+$ belongs to the center is a consequence of the Remark \ref{sub_dom} applying to the 
decomposition of
\bea
\frac {\Delta_{n-k}^+} {\Delta_{k}^-}(L+\lambda E) \frac {\Delta_{n-k}^-}{\Delta_k^+} (L+\lambda E)\nn.
\eea
\end{proof}

\medskip
The result of Lemma \ref{Centr_lin_1100} can also be obtained by methods that are used to construct an involutive family of integrals of the full Toda system \cite{DLNT}. Let us first consider a statement about the analogue of the commutative Gelfand-Zeitlin subalgebra for antidiagonal minors. Consider a family of nested Lie subalgebras
\bea
\mathfrak{gl}_n^{(1)}\subset \mathfrak{gl}_n^{(2)}\subset \ldots \subset \mathfrak{gl}_n^{(n-1)}\subset \mathfrak{gl}_n^{(n)}=\mathfrak{gl}_n,\nn
\eea
in which $\mathfrak{gl}_n^{(k)}$ is a subalgebra formed by the upper-right corner of the Lax operator, that is, by generators $e_{ij}$ with 
\bea
 i\le k,\quad j\ge n-k+1.\nn
\eea
Let's also consider the upper-right $k\times k$ corner of the operator $L-\lambda E$ and its determinant
\bea
\Delta_k^+(\lambda)=det((L-\lambda E)^{(k)}).
\eea
\begin{Th}\label{Th5.3}
Coefficients $\Delta_k^+(\lambda)$ lie in the center of $Z(S(\mathfrak{gl}_n^{(k)}))$ and therefore collectively form a commutative family with respect to the standard Kirillov-Kostant bracket.
\end{Th}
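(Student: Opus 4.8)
The plan is to prove the sharper statement that \emph{every} coefficient of the polynomial $\Delta_k^+(\lambda)=\det\big((L-\lambda E)^{(k)}\big)$ is a Casimir of $\mathfrak{gl}_n^{(k)}$, i.e.\ lies in $Z(S(\mathfrak{gl}_n^{(k)}))$. Since a central element Poisson-commutes with all of $S(\mathfrak{gl}_n^{(k)})$, the coefficients will then automatically commute with one another, giving the commutative family. First I would record the bookkeeping: with $R=\{1,\dots,k\}$, $C=\{n-k+1,\dots,n\}$ and the overlap $D=R\cap C=\{n-k+1,\dots,k\}$ (empty when $k\le n/2$), the generators of $\mathfrak{gl}_n^{(k)}$ are exactly the $e_{ij}$ with $i\in R$, $j\in C$, and these do close into a Lie subalgebra because in $[e_{ij},e_{pq}]$ a surviving $\delta$ forces the contracted index into $D$. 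The entries of the block $M:=(L-\lambda E)^{(k)}$ are precisely these generators, with $-\lambda$ subtracted on the positions $(i,i)$, $i\in D$; hence $\det M\in S(\mathfrak{gl}_n^{(k)})[\lambda]$, and it suffices to prove $\{e_{ab},\det M\}=0$ for every generator $e_{ab}$ with $a\in R$, $b\in C$.

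For the core computation I would begin from the $\mathfrak{gl}_n$-covariance identity $\{e_{ab},L\}=[E_{ba},L]$, which is immediate from $\{e_{ab},e_{ij}\}=\delta_{bi}e_{aj}-\delta_{ja}e_{ib}$, and restrict it to the block. Writing $M=S_R\,L\,S_C^{\mathsf t}$ for the row- and column-selection matrices $S_R,S_C$, one obtains
\[
\{e_{ab},M\}=S_R\,E_{ba}\,L\,S_C^{\mathsf t}-S_R\,L\,E_{ba}\,S_C^{\mathsf t}.
\]
The decisive point is the boundary behaviour: $S_R E_{ba}$ is nonzero only when $b\in R$ (hence $b\in D$) and $E_{ba}S_C^{\mathsf t}$ is nonzero only when $a\in C$ (hence $a\in D$). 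Performing the restriction turns this into
\[
\{e_{ab},M\}=[\,b\in D\,]\,\widehat E_{b,a}\,M-[\,a\in D\,]\,M\,\widetilde E_{b,a},
\]
a difference of a \emph{left} and a \emph{right} multiplication by two constant $k\times k$ matrix units $\widehat E_{b,a},\widetilde E_{b,a}$, indexed respectively by the block-row positions of $b,a$ in $R$ and the block-column positions of $b,a$ in $C$. I would stress that, unlike for the full matrix, this is \emph{not} a single commutator $[X,M]$; this asymmetry is exactly what separates $\mathfrak{gl}_n^{(k)}$ from $\mathfrak{gl}_n$.

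To finish, apply $\{e_{ab},\det M\}=\operatorname{tr}\!\big(\operatorname{adj}(M)\,\{e_{ab},M\}\big)$ together with $M\operatorname{adj}(M)=\operatorname{adj}(M)M=\det M\cdot I$ and cyclicity of the trace; both terms collapse to $\det M$ times the trace of the relevant matrix unit, so that
\[
\{e_{ab},\det M\}=\det M\,\big([\,b\in D\,]\,\delta_{ab}-[\,a\in D\,]\,\delta_{ab}\big).
\]
A one-line case check closes the argument: if $a=b$ then necessarily $a=b\in D$ and the two bracketed terms are $1-1=0$, while if $a\neq b$ both Kronecker deltas vanish. Thus every $\lambda$-coefficient of $\Delta_k^+(\lambda)$ is annihilated by all generators of $\mathfrak{gl}_n^{(k)}$ and lies in $Z(S(\mathfrak{gl}_n^{(k)}))$. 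The collective commutativity across the whole chain then follows from nestedness: for $k'\le k$ one has $\mathfrak{gl}_n^{(k')}\subset\mathfrak{gl}_n^{(k)}$, so the coefficients of $\Delta_{k'}^+$ lie in $S(\mathfrak{gl}_n^{(k')})\subset S(\mathfrak{gl}_n^{(k)})$ and hence commute with the (central) coefficients of $\Delta_k^+$. I expect the only delicate step to be the restriction step above --- tracking precisely which rows and columns of $E_{ba}$ survive the projections onto $R$ and $C$, and checking that the two surviving matrix units contribute equal traces so the determinant bracket cancels; the remaining $\det$/$\operatorname{adj}$ manipulations are routine.
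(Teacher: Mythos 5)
Your proposal is correct, and it rests on the same two pillars as the paper's own proof: the covariance identity $\{e_{ab},L\}=[E_{ba},L]$ (the paper's formula \eqref{Lax_action}) and the observation that bracketing a generator against the corner block $M=(L-\lambda E)^{(k)}$ produces only left and right multiplications of $M$ whose determinantal contributions cancel. The difference lies in the level at which the cancellation is exploited. The paper \emph{integrates}: it recasts centrality as invariance of $\Delta_k^+(\lambda)$ under conjugation by the block subgroup of matrices $g$, computes that conjugation transforms the corner as $M\mapsto PMQ$ with $P,Q$ block-triangular of reciprocal determinants, and concludes by multiplicativity of $\det$. You stay \emph{infinitesimal}: the selection-matrix computation shows $\{e_{ab},M\}$ is the difference of a left multiplication $\hat{E}_{ba}M$ (present only when $b\in D$) and a right multiplication $M\widetilde{E}_{ba}$ (present only when $a\in D$), after which Jacobi's formula $\{e_{ab},\det M\}=\operatorname{tr}\!\big(\operatorname{adj}(M)\{e_{ab},M\}\big)$ and cyclicity of the trace give $\det M\,(\delta_{ab}-\delta_{ab})=0$; the step you flagged as delicate is indeed sound, since $a\in R$ and $b\in C$ force the surviving projections to be $S_RE_{ba}S_R^t$ (requiring $b\in D$) and $S_CE_{ba}S_C^t$ (requiring $a\in D$), both of trace $\delta_{ab}$, and the $\lambda$-terms drop out because $[E_{ba},E]=0$. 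Your version is essentially the derivative of the paper's argument, but it buys three things: uniformity in $k$ (the paper's block picture with middle block of size $2k-n$ tacitly assumes $k>n/2$, while your indicators cover $k\le n/2$, where $D=\emptyset$ and $\mathfrak{gl}_n^{(k)}$ is in fact abelian, with no extra words); self-containedness (you never invoke the equivalence between Poisson centrality and invariance under a connected group action, which the paper uses implicitly); and an explicit nestedness argument $\mathfrak{gl}_n^{(k')}\subset\mathfrak{gl}_n^{(k)}$ converting centrality into collective commutativity, which the paper compresses into ``therefore.'' Conversely, the paper's group-level formulation is shorter and more conceptual: it exhibits $\Delta_k^+(\lambda)$ as an invariant of a parabolic-type subgroup, which is the structural reason behind the identity.
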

\begin{proof}
This statement is most likely well known, but we present its proof for methodological reasons. First of all, we note that
\bea
\label{Lax_action}
\{L-\lambda E, e_{kl}\}_{\mathfrak{gl}}=[L-\lambda E, E_{lk}].
\eea
Here, in the left part of the equation, the Kirillov-Kostant Poisson bracket is calculated for $\mathfrak{gl}_n$ matrix entries of the Lax operator with generators of Poisson algebra, and in the right part stands the commutator of the same Lax operator with matrix unity, as a matrix.  By virtue of (\ref{Lax_action}) the function $\Delta_k(\lambda)$ belongs to the center $Z(S(\mathfrak{gl}_n^{(k)}))$ iff  this function is invariant with respect to the adjoint action of a group of block matrices of the form
\bea
g=\left(\begin{array}{ccc}
1 & 0 & 0\\
\ast & B & 0 \\
\ast & \ast & 1
\end{array}\right); \qquad 
g^{-1}=\left(\begin{array}{ccc}
1 & 0 & 0\\
\ast & B^{-1} & 0 \\
\ast & \ast & 1
\end{array}\right),\nn
\eea
where the upper-left and lower-right blocks are of size $n-k$, and the middle one is of size $2k-n.$
In this case, the upper-right block of the Lax operator of size $k$ is transformed by the rule
\bea
(L-\lambda E)^{(k)}\rightarrow \left(\begin{array}{cc}
1 & 0 \\
\ast & B 
\end{array}\right)(L-\lambda E)^{(k)}\left(\begin{array}{cc}
B^{-1} & 0 \\
\ast & 1 
\end{array}\right).
\eea
This ensures that $\Delta_k^+(\lambda)$ are invariant with respect to such an action.
\end{proof}

\begin{Rem}
\label{inv_uni}
The same calculation is true if we choose $g$ to be a lower triangular unipotent matrix. This proves the invariance of the functions under consideration with respect to the upper triangular nilpotent subalgebra or the corresponding Lie subgroup.
\end{Rem}

Theorem \ref{Th5.3} and Remark \ref{inv_uni} has two curious consequences:
\begin{Lem}
\label{Lem_Nil}
$\Delta_k^+$ for $k\le n/2$ belong to the center of  $S(\mathfrak{n}_+)$.
\end{Lem}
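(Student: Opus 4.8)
The plan is to deduce the statement from Theorem \ref{Th5.3} and Remark \ref{inv_uni}, the key point being that for $k\le n/2$ the minor $\Delta_k^+$ genuinely lives inside $S(\mathfrak{n}_+)$. First I would observe that for $k\le n/2$ the upper-right $k\times k$ corner of $L$ meets no entry of the main diagonal: any $e_{ij}$ of this corner satisfies $i\le k$ and $j\ge n-k+1$, so $i\le k\le n-k<n-k+1\le j$ and hence $i<j$ always. Consequently every such $e_{ij}$ lies in $\mathfrak{n}_+$, the minor $\Delta_k^+$ is a polynomial in the generators of $\mathfrak{n}_+$, i.e. $\Delta_k^+\in S(\mathfrak{n}_+)$, and the shift $L\mapsto L-\lambda E$ leaves the corner untouched, so $\Delta_k^+(\lambda)=\Delta_k^+$ is independent of $\lambda$. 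Thus the invariance furnished by Remark \ref{inv_uni} applies verbatim to the function $\Delta_k^+$.

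Next I would record that $S(\mathfrak{n}_+)$ is a Poisson subalgebra of $S(\mathfrak{gl}_n)$ on which the Kirillov-Kostant bracket agrees with the intrinsic Lie-Poisson bracket of $\mathfrak{n}_+$. This follows because the inclusion $\mathfrak{n}_+\hookrightarrow\mathfrak{gl}_n$ induces a Poisson projection $\mathfrak{gl}_n^*\to\mathfrak{n}_+^*$ whose pullback is precisely the embedding $S(\mathfrak{n}_+)\hookrightarrow S(\mathfrak{gl}_n)$. Concretely, for strictly upper-triangular generators one has $\{e_{pq},e_{rs}\}_{\mathfrak{gl}}=\delta_{qr}\,e_{ps}-\delta_{sp}\,e_{rq}$, and in the surviving term $\delta_{qr}e_{ps}$ one gets $p<q=r<s$, while in $\delta_{sp}e_{rq}$ one gets $r<s=p<q$; in both cases the resulting index pair is again strictly increasing, so the bracket of two elements of $S(\mathfrak{n}_+)$ stays in $S(\mathfrak{n}_+)$ and coincides with the $\mathfrak{n}_+$-bracket.

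The final step combines the two ingredients. By Remark \ref{inv_uni} the function $\Delta_k^+$ is invariant under the adjoint action of the upper triangular nilpotent subalgebra $\mathfrak{n}_+$, which infinitesimally reads $\{e_{rs},\Delta_k^+\}_{\mathfrak{gl}}=0$ for every generator $e_{rs}$ of $\mathfrak{n}_+$, that is for all $r<s$. Since both $\Delta_k^+$ and each $e_{rs}$ lie in the Poisson subalgebra $S(\mathfrak{n}_+)$, the previous paragraph allows me to replace the $\mathfrak{gl}_n$-bracket by the $\mathfrak{n}_+$-bracket, yielding $\{e_{rs},\Delta_k^+\}_{\mathfrak{n}_+}=0$ for all $r<s$. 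As the $e_{rs}$ span $\mathfrak{n}_+$ and the bracket is bilinear, $\Delta_k^+$ Poisson-commutes with all of $S(\mathfrak{n}_+)$ and is therefore central.

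The step I expect to demand the most care is the identification of \emph{invariance under the nilpotent adjoint action inside $\mathfrak{gl}_n$} with \emph{centrality in $S(\mathfrak{n}_+)$}: the invariance supplied by Remark \ref{inv_uni} a priori only annihilates the $\mathfrak{gl}_n$-bracket of $\Delta_k^+$ with elements of $\mathfrak{n}_+$, whereas the claim concerns the intrinsic bracket of the smaller algebra. The reconciliation rests entirely on the fact that $S(\mathfrak{n}_+)$ is a Poisson subalgebra on which the two brackets coincide, which is exactly why the restriction $k\le n/2$ — ensuring that $\Delta_k^+$ really belongs to $S(\mathfrak{n}_+)$ — cannot be dropped.
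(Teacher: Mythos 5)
Your proof is correct and follows essentially the same route as the paper's: the paper likewise deduces the claim from Remark \ref{inv_uni} (invariance under $\mathfrak{n}_+$, i.e.\ membership in the centralizer of $S(\mathfrak{n}_+)$ inside $S(\mathfrak{gl}_n)$) combined with the observation that for $k\le n/2$ the minor $\Delta_k^+$ itself lies in $S(\mathfrak{n}_+)$ and is independent of $\lambda$. You merely spell out what the paper leaves implicit — the index check showing $\Delta_k^+\in S(\mathfrak{n}_+)$ and the fact that the Kirillov--Kostant bracket restricts on $S(\mathfrak{n}_+)$ to the intrinsic Lie--Poisson bracket of $\mathfrak{n}_+$ — which is a worthwhile clarification but not a different argument.
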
  
\begin{proof}
First of all, we note that for such values of the integer $k$, the function $\Delta_k^+(\lambda)$ does not depend on $\lambda$. The statement itself is a consequence of the fact that $\Delta_k^+$ lie in the centralizer $S(\mathfrak{n}_+)$ and, moreover, for $k\le n/2$ they themselves are elements of this algebra.
\end{proof}

Remark \ref{inv_uni} allows us to find the center of the field of fractions $\mathcal{F}S(\mathfrak{b}_+)$
\begin{Lem}
\label{Lem_Bor} 
Let us denote by $\Delta_k^0$ and $\Delta_k^1$ the coefficients in
the expansion of $\Delta_k^+(\lambda)$ for $k>n/2$ on the parameter $\lambda$:
\bea
\Delta_k^+(\lambda)=\lambda^{2k-n}\Delta_k^0+\lambda^{2k-n-1} \Delta_k^1+\ldots\nn
\eea
Then 
\bea
\Delta_k^1/\Delta_k^0\in Z(\mathcal{F}S(\mathfrak{b}_+)).\nn
\eea
\end{Lem}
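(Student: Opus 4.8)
The plan is to show that $\Delta_k^1/\Delta_k^0$ Poisson-commutes with the whole Borel $\mathfrak{b}_+=\mathfrak{h}\oplus\mathfrak{n}_+$, handling the nilpotent and Cartan directions separately. The first point I would establish is the (at first surprising) fact that \emph{both} $\Delta_k^0$ and $\Delta_k^1$ already lie in $S(\mathfrak{b}_+)$, so that the quotient is a genuine element of $\mathcal{F}S(\mathfrak{b}_+)$ and no delicate restriction from $S(\mathfrak{gl}_n)$ is needed. Expanding the determinant of the upper-right $k\times k$ corner of $L-\lambda E$, the coefficient of $\lambda^{2k-n-i}$ is, up to sign, the sum over subsets $S\subseteq\{n-k+1,\dots,k\}$ with $|S|=2k-n-i$ of the minor of the corner of $L$ obtained by deleting the rows and columns indexed by $S$. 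For $i=0$ one deletes all central indices and recovers $\Delta_k^0=\Delta_{n-k}^+\in S(\mathfrak{n}_+)$; for $i=1$ exactly one central index $a$ survives, and the surviving submatrix has rows $\{1,\dots,n-k\}\cup\{a\}$ and columns $\{k+1,\dots,n\}\cup\{a\}$, all of whose entries $e_{pq}$ satisfy $p\le q$. Hence $\Delta_k^1\in S(\mathfrak{b}_+)$ as well. For $i\ge 2$ two surviving indices $a_1<a_2$ force the strictly-lower entry $e_{a_2 a_1}$ into the expansion, which is precisely why the statement singles out the top two coefficients.

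Granting this, the nilpotent direction is immediate. Since $\mathfrak{b}_+$ is a Lie subalgebra, $S(\mathfrak{b}_+)$ is a Poisson subalgebra of $S(\mathfrak{gl}_n)$, so the Lie--Poisson brackets of $\Delta_k^0,\Delta_k^1$ with any $e_{ab}\in\mathfrak{n}_+$ computed in the two algebras coincide. By Remark \ref{inv_uni} these brackets vanish, so $\Delta_k^0$ and $\Delta_k^1$, and therefore their ratio, are $\mathfrak{n}_+$-invariant; note also that $\Delta_k^0=\Delta_{n-k}^+$ is a nonzero denominator.

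For the Cartan direction I would use a weight argument. Each generator $e_{pq}$ has $\mathfrak{h}$-weight $\epsilon_p-\epsilon_q$, and in any monomial of the corner determinant the row indices fill $\{1,\dots,k\}$ and the column indices fill $\{n-k+1,\dots,n\}$ bijectively; deleting a symmetric set $S$ of central indices removes $\epsilon_a$ once from the row sum and once from the column sum, so the total weight is $\mu_k=\sum_{i=1}^{k}\epsilon_i-\sum_{j=n-k+1}^{n}\epsilon_j$, independently of how many $\lambda$'s were extracted. Thus every coefficient $\Delta_k^i$ is an $\mathfrak{h}$-eigenvector of the same weight $\mu_k$; in particular the outer Cartan elements $e_{aa}$ with $a\le n-k$ or $a>k$ act nontrivially on $\Delta_k^0$ and $\Delta_k^1$ separately (with eigenvalue $\pm1$), but the ratio $\Delta_k^1/\Delta_k^0$ has weight $0$ and is annihilated by all of $\mathfrak{h}$. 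Combined with the $\mathfrak{n}_+$-invariance this shows $\Delta_k^1/\Delta_k^0$ commutes with every element of $\mathfrak{b}_+$, i.e. it lies in $Z(\mathcal{F}S(\mathfrak{b}_+))$.

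The main obstacle, and the step I would check most carefully, is the structural claim of the first paragraph: that omitting at most one diagonal $\lambda$ never forces a strictly-lower-triangular generator into the expansion. This is what keeps $\Delta_k^0,\Delta_k^1$ inside $S(\mathfrak{b}_+)$ and lets me invoke Remark \ref{inv_uni} verbatim, rather than wrestle with the (non-Poisson) projection $S(\mathfrak{gl}_n)\to S(\mathfrak{b}_+)$; once it is in place, the remaining weight bookkeeping is routine.
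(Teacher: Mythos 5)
Your proof is correct and follows essentially the same route as the paper: $\mathfrak{n}_+$-invariance of both coefficients via Remark~\ref{inv_uni}, and Cartan-invariance of the ratio because every coefficient of $\Delta_k^+(\lambda)$ carries the same $\mathfrak{h}$-weight --- your infinitesimal weight computation is exactly the paper's observation that conjugation by a diagonal matrix $H$ multiplies the whole polynomial by the scalar $h_1\cdots h_{n-k}/(h_{k+1}\cdots h_n)$. Your first paragraph merely fills in, via the explicit $\lambda$-expansion, what the paper dismisses as ``easy to see,'' namely that $\Delta_k^0$ and $\Delta_k^1$ lie in $S(\mathfrak{b}_+)$.
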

\begin{proof}
It is easy to see that $\Delta_k^0=\Delta_{n-k}^+$,  both coefficients $\Delta_k^0$ and $\Delta_k^1$ lie in $S(\mathfrak{b}_+),$ and commute with elements of $\mathfrak{n}_+$ by virtue of Remark \ref{inv_uni}. In addition, we find the adjoint action of the Cartan subalgebra on the coefficients of the polynomial
 $\Delta_k^+(\lambda)$. We have:
\bea
\Delta_k ^+(H(L-\lambda E)H^{-1})=\frac {h_1\ldots h_{n-k}}{h_{k+1}\ldots h_n} \Delta_k^+(\lambda).\nn
\eea
The entire polynomial is multiplied by a scalar, hence the ratios of its coefficients are invariant with respect to the action of the Cartan subalgebra.
\end{proof}
\begin{Rem}
The expression $\Delta_k^1/\Delta_k^0$ exactly coincides with the function $\Delta_k^*,$ defined by the formula \eqref{KazLam}. Thus, the statement of Lemmas \ref{Lem_Nil} and \ref{Lem_Bor} is equivalent to the statement of Lemma
\ref{Centr_lin_1100}.
\end{Rem}

\section*{Concluding remarks}
A characteristic property of the family of compatible quadratic Poisson brackets considered in the paper is the 
log-canonicity of brackets between antidiagonal minors and generators of the Poisson algebra. We assume that the log-canonicity conditions for other sets of functions may lead to meaningful examples of families of compatible quadratic Poisson brackets. It is possible that an interesting choice of such sets of functions is related to natural algebraic factorization problems, in particular to some Manin triples.

In the framework of possible future research, we are particularly curious about the quantization of the results obtained, which concerns the quantization of a compatible families of quadratic brackets \eqref{quad-bracket}, \eqref{coe}, their linear incarnations, the centers of both brackets, and commutative families of functions.

\begin{ack} 
The authors are grateful to D. Gurevich, V. Rubtsov, M. Semenov-Tian-Shansky and Yu. Suris for useful discussions.  
Parts 1 and 2 of the work was carried out with the support of the Russian Science Foundation grant 23-71-10110 within the framework of a development programme for the Regional Scientific and Educational Mathematical Center of the Yaroslavl State University with financial support from the Ministry of Science and Higher Education of the Russian Federation (Аgreement on provision of subsidy from the federal budget No. 075-02-2024-1442). 
Parts 3 and 4 of the work were carried out with the support of the Ministry of Education and Science, project No. FSMG-2024-0048.
\end{ack}


\begin{thebibliography}{50}

\bibitem{Skl}
 E. K. Sklyanin, {\em Some algebraic structures connected with the Yang–Baxter equation}, Funktsional. Anal. i Prilozhen., 16:4 (1982), 27-34; Funct. Anal. Appl., 16:4 (1982), 263-270

\bibitem{StSS}
Semenov-Tian-Shansky, M., Sevostyanov, A. (1997). Classical and Quantum Nonultralocal Systems on the Lattice. In: Fokas, A.S., Gelfand, I.M. (eds) Algebraic Aspects of Integrable Systems. Progress in Nonlinear Differential Equations and Their Applications, vol 26. Birkhäuser Boston. 

\bibitem{StS} Michael A. Semenov-Tian-Shansky, {\em Dressing Transformations and Poisson Group Actions}. 
Publ. Res. Inst. Math. Sci., 1985, V. 21, no. 6, pp. 1237–1260.

\bibitem{Suris1}
Yuri B. Suris, {\em r-Matrices for Relativistic Deformations
of Integrable Systems}, Journal of Nonlinear Mathematical Physics 1999, V.6, N 4, 411-447. 

\bibitem{Suris2}
Yuri B. Suris, {\em The Problem of Integrable Discretization: Hamiltonian Approach}, 2003, Springer Basel AG, 1070 pp,  
doi:10.1007/978-3-0348-8016-9 

\bibitem{FRT}
L.D. Faddeev, N.Yu. Reshetikhin, and L.A. Takhtajan, {\em Quantization of Lie groups and
Lie algebras}, Algebra and Analysis {\bf 1} (1989), 178-206.


\bibitem{Cher}
I.V. Cherednik, {\em Factorizing particles on a half-line and root systems}, Theoret. Math.
Phys {\bf 61} (1984), 977-983.

\bibitem{DLNT}
P. Deift, L.-C. Li, T. Nanda and C. Tomei, The Toda flow on a generic orbit is integrable,
Comm. Pure Appl. Math. 39 (1986) 183-232

\bibitem{GSV}
 Gekhtman M.   Shapiro M.   Vainshtein A.  {\em Cluster Algebras
and Poisson Geometry }ISBN: 978-0-8218-4972-9 
Year: 2010

\bibitem{T}
Dmitry V. Talalaev, {\em RE-algebras, quasi-determinants and the full Toda system}, Partial Differential Equations in Applied Mathematics, Volume 11, 2024, 100898, ISSN 2666-8181, https://doi.org/10.1016/j.padiff.2024.100898.

\bibitem{Skr}
T. Skrypnik, {\em Commutative integrability of the full quantum Toda system}, Methods of functional analysis and topology, Volume 6, number 5, 2000, pp. 70-84.

\bibitem{MarFok} V.Fock and A.Marshakov, {\em A Note on Quantum Qroups and Relativistic Toda Theory}, Nuclear Physics B (Proc. Suppl.) 1997, 56B, 208-214. 

\bibitem{magri1}
Magri F., Casati P., Falqui G., and  Pedroni M.,
{\it Eight lectures on Integrable Systems},
{\em In: Integrability of Nonlinear Systems} (Y. Kosmann-Schwarzbach et al. eds.), Lecture Notes in Physics (2nd edition), 2004, {\bf 495}, 209--250.

\bibitem{Kost}
Kostant, B. {\em The solution to a generalized Toda lattice and representation theory}. Adv. in Math. 34, 3 (1979),195-338.



\end{thebibliography}
\end{document}